\title{Listing $6$-Cycles in Sparse Graphs}
\author{Virginia {Vassilevska Williams}}{MIT, USA}{virgi@mit.edu}{https://orcid.org/0000-0003-4844-2863}{Supported by NSF Grant CCF-2330048, BSF Grant 2020356 and a Simons Investigator Award.}
\author{Alek Westover}{MIT, USA}{alekw@mit.edu}{https://orcid.org/0009-0007-8381-5705}{}
\authorrunning{V. {Vassilevska Williams} and A. Westover} 
\keywords{Graph algorithms, cycles listing, fine-grained complexity, sparse graphs} 
\newcommand{\defn}[1]{\emph{\textbf{{#1}}}}
\renewcommand{\paragraph}[1]{\vspace{.15 cm} \noindent \textbf{#1} }
\newcommand{\les}{\lesssim}
\newcommand{\mc}{\mathcal}
\newcommand{\one}{\mathbbm{1}}
\renewcommand{\O}{O}
\newcommand{\tilo}{\widetilde{\O}}
\DeclareMathOperator{\polylog}{\text{polylog}}
\DeclareMathOperator{\poly}{\text{poly}}
\newcommand{\interior}[1]{ {\kern0pt#1}^{\mathrm{o}} }
\newcommand{\eps}{\varepsilon}
\newcommand{\setof}[2]{\left\{ #1\; \mid \;#2 \right\}}
\newcommand{\set}[1]{\left\{ #1\right\}}
\newcommand{\N}{\mathbb{N}}
\newcommand{\ceil}[1]{\lceil #1 \rceil}
\newcommand{\abs}[1]{\left| #1 \right|}
\theoremstyle{definition}
\newtheorem{fact}[theorem]{Fact}
\begin{document}

\maketitle

\begin{abstract}
This work considers the problem of output-sensitive listing of occurrences of
$2k$-cycles for fixed constant $k\geq 2$ in an undirected host graph with $m$ edges and $t$ $2k$-cycles.
Recent work of Jin and Xu (and independently Abboud, Khoury, Leibowitz, and Safier) [STOC 2023] gives an $O(m^{4/3}+t)$ time algorithm for listing $4$-cycles,
and recent work by Jin, Vassilevska Williams and Zhou [SOSA 2024] gives an
$\widetilde{O}(n^2+t)$ time algorithm for listing $6$-cycles in $n$ node graphs.
We focus on resolving the next natural question: obtaining listing algorithms
for $6$-cycles in the sparse setting, i.e., in terms of $m$ rather than $n$.
Previously, the best known result here is the better of Jin, Vassilevska Williams and Zhou's $\tilo(n^2+t)$ algorithm and Alon, Yuster and Zwick's $O(m^{5/3}+t)$ algorithm.

We give an algorithm for listing $6$-cycles with running time $\tilo(m^{1.6}+t)$.
Our algorithm is a natural extension of Dahlgaard, Knudsen and
St\"ockel's [STOC 2017] algorithm for detecting a $2k$-cycle.
Our main technical contribution is the analysis of the algorithm which involves
a type of ``supersaturation'' lemma relating the number of $2k$-cycles in a
bipartite graph to the sizes of the parts in the bipartition and the number of
edges.
We also give a simplified analysis of Dahlgaard, Knudsen and St\"ockel's
$2k$-cycle detection algorithm (with a small polylogarithmic increase in the
running time), which is helpful in analyzing our listing algorithm.
\end{abstract}



\section{Introduction}
Listing copies of a small pattern graph that occur as subgraphs of a host graph
is a fundamental problem in algorithmic graph theory. In this work, we consider
the problem of listing $C_{2k}$'s (i.e., $2k$-cycles) for fixed constant $k\geq 2$.
Some examples of applications of $C_{2k}$ listing include analyzing social
networks \cite{motivationSocial}, and understanding causal relationships in
biological interaction graphs \cite{motivationBIO}. (See e.g. \cite{dense_cycle_detection_YZ} for further motivation.)
In the following discussion we consider an $n$ vertex $m$ edge graph $G$ with
$t$ $C_{2k}$'s (where $k\geq 2$ will be clear from context and $t$ is not necessarily known).

The main reason for focusing on even length cycles is that while there are
dense graphs that contain no odd cycles (e.g. bipartite graphs), a classic result of Bondy and
Simonovits \cite{evenextremal} states that any graph with at least
$100kn^{1+1/k}$ edges contains a $C_{2k}$, for any integer $k\geq 2$. This fact enables efficient
``combinatorial'' algorithms for $C_{2k}$ detection, in contrast to the case of odd
length cycles where efficient $C_{2k+1}$ detection algorithms are based on
matrix multiplication. In fact, very simple reductions (e.g. \cite{vthesis}) show that for any $k\geq 1$, $C_{2k+1}$ detection is at least as hard as triangle detection, and the latter problem is known to be fine-grained subcubically equivalent to Boolean Matrix Multiplication \cite{focsy}. Thus, fast algorithms for odd cycles cannot avoid matrix multiplication. We focus on even cycles from now on.

A classic result of Yuster and Zwick \cite{dense_cycle_detection_YZ} shows how
to determine whether a graph contains a $C_{2k}$ in $O(n^2)$ time for any given constant $k\geq 2$. This quadratic running time is conjectured
to be optimal in general (see \cite{short_cycle_removal,LincolnVyas,Kn17}, also discussed below); in particular, \cite{short_cycle_removal} gave concrete evidence for the hardness of $C_4$ detection.

Nevertheless, in
the regime of sparser graphs, where $m<o(n^{1+1/k})$ improvements are possible. For $C_4$'s there
is a simple $O(m^{4/3})$ algorithm \cite{alon1997finding} that matches the
performance of the $O(n^2)$ algorithm
of \cite{dense_cycle_detection_YZ} for $m=\Theta(n^{3/2})$, and improves on
the performance for all $m<o(n^{3/2})$. Dahlgaard, Knudsen and
St\"ockel \cite{Kn17} give an algorithm for $C_{2k}$ detection with running time
$O(m^{2k/(k+1)})$ for every constant $k\geq 2$. This matches the $O(n^2)$
running time from \cite{dense_cycle_detection_YZ} for
$m=\Theta(n^{1+1/k})$ and improves on it for $m<o(n^{1+1/k})$. 

In fine-grained complexity, it is common to assume widely-believed hypotheses and use reductions to obtain conditional lower bounds for fundamental problems. 
For the special case of cycle detection problems, most conditional lower bounds are based on hypotheses related to triangle detection. One of the most common hypotheses is that triangle detection in $n$-node graphs does not have a ``combinatorial'' \footnote{The class of combinatorial algorithms is not
well defined, but intuitively refers to algorithms that do not use fast matrix
multiplication as a subroutine.} $O(n^{3-\eps})$ time algorithm for $\eps>0$ (in the word-RAM model). Vassilevska W. and Williams \cite{focsy} showed that this hypothesis is equivalent to the so called BMM Hypothesis that postulates that there is no $O(n^{3-\eps})$ time combinatorial algorithm for multiplying two $n\times n$ {\em Boolean} matrices.

As mentioned earlier, it is easy to show that under the BMM hypothesis, detecting any odd cycle $C_{2k+1}$ requires $n^{3-o(1)}$ time.
Dahlgaard, Knudsen and St\"ockel
\cite{Kn17} gave lower bounds for combinatorial detection of even cycles of fixed length in {\em sparse} graphs. They show that there is no
combinatorial algorithm for $C_6$ detection, or $C_{2k}$ detection for any
$k>4$, with running time $O(m^{3/2-\eps})$ for $\eps>0$. Lincoln and Vyas \cite{LincolnVyas}
give a similar conditional lower bound under a different hypothesis, extending the lower bounds for potentially non-combinatorial algorithms. Lincoln and
Vyas show that, for large enough constant $k$, a $C_{2k}$ detection algorithm in
graphs with $m\le O(n)$ with running time $m^{3/2-\eps}$ would imply an
algorithm for $\max$-3-SAT with running time $2^{(1-\eps')n} n^{O(1)}$.



The problem of {\em listing} even cycles is less well-understood. Without improving
the $C_{2k}$ detection algorithms discussed above, the best result that we could
hope for in general is an algorithm with running time $O(n^2+t)$, where $t$ is
the number of $2k$-cycles in the graph, and $O(m^{2k/(k+1)}+t)$ in the sparse
setting where $m<o(n^{1+1/k})$. Recently, \cite{Ab22} and \cite{3sumLBCe} gave
such an algorithm for $C_4$ listing. In fact, Jin and Xu's \cite{3sumLBCe} algorithm gives a
stronger guarantee: After $O(m^{4/3})$ pre-processing time, they can
(deterministically) enumerate $4$-cycles with $O(1)$ delay per cycle. 
Jin and Xu \cite{3sumLBCe} (and concurrently by Abboud, Bringmann, and Fischer
\cite{3sumLBAmir}) shows that, under the 3SUM Hypothesis, there is no algorithm
for  $C_4$ enumeration with $O(n^{2-\eps})$ or $O(m^{4/3-\eps})$ pre-processing
time and $n^{o(1)}$ delay, for $\eps>0$.

For $C_6$'s, Jin, Zhou and Vassilevska Williams \cite{C6sCe} give an
$\tilo(n^2+t)$ listing algorithm.
For sparse listing algorithms, the previous state of the art is
\cite{alon1997finding} whose work implies an $O(m^{(2k-1)/k}+t)$ time $C_{2k}$
listing algorithm. See also \cite{bringmannclass} where the complexity of the harder problem of listing $H$-partite\footnote{A $k$-node $H$ is an $H$-partite subgraph of a $k$-partite graph $G$ with vertex set $V=\cup_{a\in V(H)} V_a$ if there are $k$ vertices $v_1,\ldots,v_k$ such that $v_a\in V_a$ for each $a\in \{1,\ldots,k\}$ and the mapping $a\rightarrow v_a$ is an isomorphism between $H$ and the subgraph of $G$ induced by $v_1,\ldots,v_k$. In other words, instead of looking for an arbitrary subgraph of $G$ isomorphic to $H$, one only focuses on the subgraphs with exactly one node in each $V_a$ and such that the node picked from $V_a$ corresponds to node $a$ of $H$.} subgraphs is investigated.

\subsection{Our contributions}
We consider the problem of listing all $C_6$s in an $m$-edge graph.
The best known result so far is an  $O(m^{5/3}+t)$ time algorithm that follows from the work of \cite{alon1997finding}. Our main result is the first improvement over this 27 year old running time:
\begin{restatable*}{theorem}{listfast}\label{thm:listfast}
There is an algorithm for listing $C_6$'s in time $\tilo(m^{1.6}+t)$.
\end{restatable*}


We now summarize the ideas
needed to prove \cref{thm:listfast}.

\cref{thm:listfast} follows easily after establishing a certain bound on the
number of \defn{capped $k$-walks} in a graph.
Capped $k$-walks are a notion introduced by Dahlgaard, Knudsen and St\"ockel in
\cite{Kn17}. Roughly speaking, a capped $k$-walk is a walk of length $k$ where
the first vertex in the walk has higher degree than the remaining vertices in
the walk (handling vertices of equal degree requires a bit of additional care).
\cite{Kn17}'s $O(m^{2k/(k+1)})$ time $C_{2k}$ detection algorithm is based on
the following fact:
\begin{restatable*}{theorem}{thmcappedwalks}\cite{Kn17}
\label{thm:cappedwalks}
Let $G$ be a $C_{2k}$-free graph with maximum degree $\Delta(G)\le m^{2/(k+1)}$. Then, there
are at most $\tilo(m^{2k/(k+1)})$ capped $k$-walks in $G$.
\end{restatable*}
One of our main contributions is a {\em simplified proof} of \cref{thm:cappedwalks},
although with polylogarthmically worse guarantees than \cite{Kn17}. This
simplified analysis of capped $k$-walks is quite helpful in obtaining
\cref{thm:listfast}.
The key lemma used to prove \cref{thm:cappedwalks} is a generalization to
bipartite graphs of Bondy and Simonovits' classic theorem on
the extremal number of $C_{2k}$'s. Specifically, the result is:
\begin{theorem}\cite{Kn17}\label{fact:bipartiteextremal}
Let $G$ be a bipartite graph with vertex parts of sizes $L,R$ and with $m$
edges. If $m>100k(L+R+(LR)^{(k+1)/(2k)})$ then $G$ contains a $C_{2k}$.
\end{theorem}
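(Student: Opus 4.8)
The plan is to prove the contrapositive: assuming $m > 100k\bigl(L+R+(LR)^{(k+1)/(2k)}\bigr)$, we exhibit a $C_{2k}$ in $G$. Write the parts of $G$ as $A$ and $B$ with $|A| = L \le R = |B|$ (swapping names if necessary). First I would perform a standard cleanup: repeatedly delete any vertex of $A$ whose current degree is below $m/(4L)$ and any vertex of $B$ whose current degree is below $m/(4R)$. At most $L$ deletions occur on the $A$-side, each destroying fewer than $m/(4L)$ edges, and symmetrically on the $B$-side fewer than $m/(4R)$ edges per deletion, so fewer than $m/2$ edges are removed in total. Hence the resulting subgraph $G'$ is nonempty, still $C_{2k}$-free, contains more than $m/2$ edges, and every $a \in A$ has $\deg_{G'}(a) \ge d_A := m/(4L)$ while every $b \in B$ has $\deg_{G'}(b) \ge d_B := m/(4R)$; note $d_A, d_B \ge 25k$ and $d_A d_B = m^2/(16LR)$.

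Next I would run a breadth-first search in $G'$ from a vertex $v$ in the \emph{smaller} part $A$, obtaining layers $D_0 = \{v\}, D_1, D_2, \dots$ and a BFS tree $T$. Since $G'$ is bipartite, $D_i \subseteq A$ for even $i$ and $D_i \subseteq B$ for odd $i$, and every edge of $G'$ joins consecutive layers. The heart of the proof is a Bondy--Simonovits-style layer lemma. The starting observation is that a non-tree edge joining $D_{i-1}$ to $D_i$ closes, together with the two tree-paths back to $v$, into a cycle of length $2(i-\ell)$, where $\ell$ is the level of the lowest common ancestor of its two endpoints; hence $C_{2k}$-freeness forbids the case $\ell = i-k$, and in the range $i \le k$ this says such an edge must lie within a single branch of $T$ (a subtree rooted at a vertex of $D_1$). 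Exploiting this and the analogous restrictions, the layer lemma asserts that through depth $k$ the layers cannot all remain small: the minimum-degree bounds force
\[
|D_k| \;\gtrsim\; \frac{d_A^{\lceil k/2 \rceil}\, d_B^{\lfloor k/2 \rfloor}}{\poly(k)}.
\]

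Finally, the even layers $D_0, D_2, \dots$ are pairwise disjoint subsets of $A$ and the odd layers $D_1, D_3, \dots$ are pairwise disjoint subsets of $B$, so $|D_k| < L$ when $k$ is even and $|D_k| < R$ when $k$ is odd. Combining either bound with the displayed inequality and substituting $d_A = m/(4L)$, $d_B = m/(4R)$, a short computation yields $m < \poly(k)\cdot L^{1/2+1/k}R^{1/2}$ for $k$ even and $m < \poly(k)\cdot(LR)^{(k+1)/(2k)}$ for $k$ odd; since $L \le R$, the first quantity is itself at most $\poly(k)\cdot(LR)^{(k+1)/(2k)}$, and with the cleanup thresholds chosen carefully the resulting bound contradicts $m > 100k\bigl(L+R+(LR)^{(k+1)/(2k)}\bigr)$. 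I expect the main obstacle to be the layer lemma itself. The difficulty is that $C_{2k}$-freeness only directly constrains the top $O(k)$ BFS layers, and a priori a low layer could be dense enough to swallow most of the edges leaving the layer above it and thereby stall the growth; ruling this out --- or extracting a $C_{2k}$ when it occurs, which is where the $K_{k,k}$-freeness of a $C_{2k}$-free graph together with a pigeonhole or inductive argument comes in --- is exactly the delicate Bondy--Simonovits step, now additionally bookkeeping the two part sizes and the two minimum degrees.
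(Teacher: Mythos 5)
Your outer scaffolding is the standard one and is essentially sound: the min-degree cleanup (deleting low-degree vertices loses fewer than $m/2$ edges and guarantees degrees at least $m/(4L)$ on the $A$-side and $m/(4R)$ on the $B$-side), the BFS layering from a vertex of the smaller side, and the closing arithmetic (combining $|D_k|\le L$ or $|D_k|\le R$ with $|D_k|\gtrsim d_A^{\lceil k/2\rceil}d_B^{\lfloor k/2\rfloor}/\poly(k)$ does give $m\lesssim \poly(k)\,L^{1/2+1/k}R^{1/2}$ for $k$ even and $m\lesssim \poly(k)\,(LR)^{(k+1)/(2k)}$ for $k$ odd, and $L\le R$ folds the first bound into the second). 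But there is a genuine gap exactly where you flag an obstacle: the layer-expansion lemma is asserted, not proven, and it is the entire content of the theorem. The supporting observation you give --- that a non-tree edge between $D_{i-1}$ and $D_i$ whose endpoints have lowest common ancestor at level $\ell$ closes a cycle of length $2(i-\ell)$ --- yields essentially nothing for the layers you need: for $i<k$ the forbidden value $\ell=i-k$ is negative, so $C_{2k}$-freeness imposes no constraint at all via single non-tree edges, and even at $i=k$ it only excludes pairs of paths meeting at the root; in particular it does not force edges to lie within a single branch for all $i\le k$, and it cannot rule out the stalling scenario in which most edges leaving $D_{i-1}$ go back up rather than forward. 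Excluding that scenario is the delicate Bondy--Simonovits step (showing the bipartite graph between consecutive layers, restricted to a branch, is $\poly(k)$-degenerate unless a cycle of length exactly $2k$ appears, via the theta-graph lemma), and carrying it out while tracking the two part sizes and the two degree scales separately is precisely what the cited result of \cite{Kn17} (in the spirit of Naor--Verstra\"ete's unbalanced bipartite bound) accomplishes. Note the paper itself does not reprove this theorem; it quotes it from \cite{Kn17}, so your sketch is attempting to reconstruct that proof and currently omits its core.

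A smaller quantitative point: even granting the expansion lemma, you would still have to make the implicit $\poly(k)$ losses compatible with the explicit factor $100k$ in the hypothesis (your ``cleanup thresholds chosen carefully''); this is routine bookkeeping, but it is not yet done in the proposal.
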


In order to use capped $k$-walks for a listing algorithm, it would be useful to
have a \defn{supersaturation} variant of \cref{fact:bipartiteextremal}, which
would guarantee the presence of \emph{many} $C_{2k}$'s if the edge density is
large; the fact that this supersaturation result could help with listing was
communicated to us by Jin and Zhou \cite{supersat_observation}. 
The supersaturation analog of Bondy and Simonovits' \cite{evenextremal} extremal
number for $C_{2k}$'s is known:
\begin{theorem}\cite{JiangYep20}\label{fact:vanillasupersat}
For every integer $k\geq 2$, there exist constants $c,C$ such that if $G$ is an $n$-node graph with $m\geq Cn^{1+1/k}$ edges, then $G$ contains at least $c(m/n)^{2k}$ copies of $C_{2k}$.
\end{theorem}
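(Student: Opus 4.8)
This supersaturation statement is due to Jiang and Yepremyan; I outline the strategy one would take. The plan is to pass to an almost-regular bipartite host graph and then count closed $2k$-walks, rather than to argue by naive averaging: the obvious approach --- sample small random sets $S\subseteq X$, $T\subseteq Y$, invoke the bipartite extremal bound \cref{fact:bipartiteextremal} to obtain a $C_{2k}$ inside $G[S,T]$ whenever its edge count crosses the threshold, and average over $S,T$ --- only produces about $(|X|/|S|)^{2k}$ copies of $C_{2k}$, which falls short of $(m/n)^{2k}$ by a fixed power of $n$; this loss is intrinsic, since a dense spot is used only to produce \emph{one} cycle. One therefore wants to extract many cycles from a dense spot, and the right tool for that is a convexity (second-moment) estimate on walks.

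So first I would regularize: every $m$-edge graph contains a subgraph of minimum degree $\ge m/n$, and, after discarding one side of a maximum cut and bucketing vertices by degree --- all of which cost only constant factors, which is affordable since we only need the bound for \emph{some} constant $c$ and may take $C$ as large as we wish --- one reduces to the case that $G$ is bipartite with parts $X,Y$ of size $N=\Theta(n)$ and $\Theta(d)$-almost-regular with $d=\Theta(m/n)$; the hypothesis $m\ge Cn^{1+1/k}$ becomes $d\ge C'n^{1/k}$ for a constant $C'=\Omega(C)$, so $d\to\infty$ and, crucially, $d^{k}/N=\Theta(d^{k}/n)$ is a constant that can be made as large as we wish by taking $C$ large. (Carrying out the bucketing so as not to lose even a $\polylog(n)$ factor --- which one needs in order to end with a genuine constant $c$ --- already requires some care, e.g.\ a degree-weighted choice of bucket.) The ``mass'' is then in place: $\hom(C_{2k},G)=\operatorname{tr}(A^{2k})=\sum_i\lambda_i^{2k}\ge\lambda_1^{2k}\ge(2m/n)^{2k}$ (equivalently, even cycles are Sidorenko); and, more usefully for counting, the number of walks of length $k$ in $G$ is $\Theta(Nd^{k})$, so by Jensen applied to $x\mapsto\binom x2$ there are $\Omega\bigl(N^{2}(d^{k}/N)^{2}\bigr)=\Omega(d^{2k})=\Omega((m/n)^{2k})$ \emph{pairs} of length-$k$ walks with the same ordered pair of endpoints, while each genuine $C_{2k}$ accounts for only $O_{k}(1)$ such pairs.

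The main obstacle --- the technical heart --- is to show that the \emph{bad} walk-pairs, namely those whose union visits fewer than $2k$ distinct vertices (one walk self-intersects, or the two walks cross internally), and more generally the self-intersecting closed $2k$-walks, do not swamp the genuine $2k$-cycles. Each such configuration is a homomorphic image of $C_{2k}$ onto one of the $O_{k}(1)$ proper quotient graphs $H$ of $C_{2k}$, and for each $H$ one must bound the number of copies of $H$ in $G$; the point is that a vertex coincidence forced by the quotient replaces a ``free'' vertex choice by none, and in an \emph{almost-regular} graph each lost choice was worth a factor $\Omega(d)$ or $\Omega(N)$, both tending to infinity with $n$. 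The delicate case is a quotient with as many as $2k-1$ vertices (a shorter cycle with a pendant, say): a crude count there does not by itself beat $(m/n)^{2k}$, and one must instead use that \emph{when} such degenerate walks are numerous --- i.e.\ when $G$ contains many shorter cycles --- the genuine $C_{2k}$'s are even more numerous, obtained by rerouting a shorter cycle while exploiting that each of its vertices has $\Omega(d)$ neighbours. Pushing this through uniformly in $k$, without reintroducing a $\polylog$ loss, is what makes the proof nontrivial; the complete argument is in \cite{JiangYep20}. Granting it, $\#C_{2k}(G)=\Omega((m/n)^{2k})$, which transfers back through the regularization to the stated bound.
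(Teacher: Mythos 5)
First, a point of comparison: the paper does not prove \cref{fact:vanillasupersat} at all --- it is imported as a black box from \cite{JiangYep20} --- so there is no internal proof to measure your argument against; the only question is whether your sketch would stand on its own.

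It does not, and the gap is the one you yourself flag. The setup is fine: pass to an almost-regular (bipartite) subgraph with $N=\Theta(n)$ vertices and degree $d=\Theta(m/n)$, count pairs of $k$-walks with common endpoints (equivalently closed $2k$-walks via $\operatorname{tr}(A^{2k})$), get $\Omega(d^{2k})$ such pairs by convexity, and note each genuine $C_{2k}$ accounts for $O_k(1)$ of them. But the entire content of the theorem is the step you delegate back to \cite{JiangYep20}: showing the degenerate closed $2k$-walks, i.e.\ homomorphic images of $C_{2k}$ onto proper quotients, do not dominate. Almost-regularity alone does not dispose of the $(2k-1)$-vertex quotients: a $C_{2k-2}$ traversed with one pendant backtrack contributes roughly $d$ walks per copy of $C_{2k-2}$, and an almost-regular graph at this density can contain far more than $d^{2k-2}$ copies of $C_{2k-2}$, so the ``reroute a shorter cycle using minimum degree $\Omega(d)$'' remedy you gesture at is precisely the nontrivial lemma (one must extend short cycles to $2k$-cycles while avoiding vertex collisions, uniformly in the count), not a routine verification. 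A second unproven assertion also matters quantitatively: you claim the degree regularization can be done with only constant-factor loss. Standard dyadic bucketing loses a $\log n$ factor in the edge count, hence $\log^{2k} n$ in the final bound, which destroys the constant $c$ in $c(m/n)^{2k}$; the ``degree-weighted choice of bucket'' is named but not carried out. So your sketch correctly identifies the strategy and the crux, but as submitted it is an outline that explicitly grants the hard part to the cited paper --- which is exactly the status the statement already has in this paper, where it is used only as a quoted result.
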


Jin and Zhou \cite{supersat_observation} formulated the following conjectured generalization of
\cref{fact:vanillasupersat}:

\begin{restatable*}{conjecture}{supersatconj} 
\label{conj:supersat}
\textbf{The Unbalanced Supersaturation Conjecture [Jin and Zhou'24]:}
Let $G$ be an $m$ edge bipartite graph with vertex bipartition $A \sqcup B$,
with $t$ $C_{2k}$'s.
Suppose $m \ge 100k(|A|+|B| + (|A||B|)^{(k+1)/2k} )$.
Then,
\[ t \ge \Omega\left( \frac{m^{2k}}{|A|^{k}|B|^{k}} \right). \] 
\end{restatable*}

Jin and Zhou obtained the following conclusion using the approach of \cite{Kn17}:

\begin{fact}[\cite{supersat_observation}]
If the Unbalanced Supersaturation Conjecture is true, then there is an $\tilo(t+m^{2k/(k+1)})$ time algorithm for $2k$-cycle listing for all $k\geq 2$.
\end{fact}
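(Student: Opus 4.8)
The plan is to run Dahlgaard, Knudsen and St\"ockel's $C_{2k}$-detection algorithm \emph{without terminating at the first cycle found}: enumerate every capped $k$-walk, discard the ones that are not simple paths (these never lie on a $C_{2k}$), bucket the walks by their (cap, endpoint) pair, and for each bucket report the $C_{2k}$ formed by every pair of walks in it whose internal vertex sets are disjoint. Fixing a total order on the vertices to break ties in the definition of ``capped,'' the cap of a $C_{2k}$ (its maximum-degree vertex) is unique, the antipodal vertex is then determined, and the two arcs are precisely the two capped $k$-walks between them; so each $C_{2k}$ is produced by exactly one unordered pair of capped walks and is reported $\O(1)$ times. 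Since $k$ is constant, testing two length-$k$ walks for internal disjointness costs $\O(1)$, so the running time is governed by (i) the number of capped $k$-walks enumerated and (ii) the work of converting buckets into cycles.

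For (i) I would re-run the proof of \cref{thm:cappedwalks} but drop the hypothesis that $G$ is $C_{2k}$-free, charging the surplus to $t$. That proof partitions the vertices into $\O(\log m)$ degree classes and, for the capped walks whose cap lies in a fixed class, bounds their number by building an auxiliary bipartite graph $H$ whose edges encode the relevant pieces of such walks and applying \cref{fact:bipartiteextremal}: $C_{2k}$-freeness of $G$ forces $H$ to be sparse, hence few walks. Without $C_{2k}$-freeness I would instead invoke \cref{conj:supersat} on $H$: if $H$ exceeds the density threshold, it contains $\Omega\!\left(e(H)^{2k}/(|A|^{k}|B|^{k})\right)$ copies of $C_{2k}$, and since the gadget defining $H$ is local these copies map $\O(1)$-to-one into the $C_{2k}$'s of $G$ and so number at most $\O(t)$. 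Thus in each degree class the number of capped walks is at most the $C_{2k}$-free bound $\tilo(m^{2k/(k+1)})$ plus $\tilo(t)$, and summing over the $\O(\log m)$ classes bounds the total by $\tilo(m^{2k/(k+1)}+t)$; the walks themselves can be listed with only constant-factor overhead by a bounded-depth DFS from each potential cap.

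Step (ii) is the delicate part, and I expect it to be the main obstacle. A bucket with $s$ walks has up to $\binom{s}{2}$ pairs; the internally disjoint ones are exactly the $C_{2k}$'s through that (cap, endpoint), so they total at most $t$ across all buckets, but the pairs that share an internal vertex are \emph{not} controlled by $t$, so a blind $\O(s^{2})$ scan can overshoot the budget. I would avoid forming all pairs: group the walks of a bucket by their internal vertex set $I$ (here $|I|\le k-1$), and as walks stream in maintain a dictionary keyed by the $\le 2^{k-1}$ subsets of each internal set, so that by inclusion-exclusion over these $\O(1)$ subsets, together with per-vertex lists of the walks already seen through each vertex, the disjoint partners of a new walk can be enumerated with $\tilo(1)$ overhead per reported cycle and $\tilo(1)$ per walk. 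Should this bookkeeping resist a worst-case output-sensitive guarantee, the fallback is a threshold argument: handle buckets with $s\le\tau$ (for $\tau=m^{o(1)}$) by brute force, charging $\O(s\tau)=\tilo(s)$ to the capped-walk count, and show that a bucket with $s>\tau$ walks forces $\Omega(s^{2}/\poly(m))$ copies of $C_{2k}$ through its (cap, endpoint)---via a further application of \cref{conj:supersat}, or of \cref{fact:bipartiteextremal} applied iteratively, to the bipartite incidence structure of the length-$k$ walks between the cap and the endpoint---so that the $\O(s^{2})$ work on large buckets is paid for by $t$. Combining (i) and (ii) yields the claimed $\tilo(t+m^{2k/(k+1)})$ running time.
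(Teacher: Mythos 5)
Your step (i) --- bounding the number of capped $k$-walks by $\tilo(m^{2k/(k+1)}+t)$ by re-running the capped-walk analysis and invoking \cref{conj:supersat} whenever a layer-pair is too dense --- is essentially the paper's route (\cref{cor:supersatnondisjoint}, \cref{lem:keylem2}, \cref{thm:morecappedwalks}), and is fine in spirit (one omission: your enumeration has no degree cutoff, whereas the capped-walk bound needs $\Delta(G)\le m^{2/(k+1)}$; the paper handles the few high-degree start vertices separately, paying $m^2/\Delta=m^{2k/(k+1)}$ in \cref{lem:listlist}).

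The genuine gap is step (ii), converting walks into listed cycles, and it is exactly the part you flag as delicate. Your first mechanism does not work as described: inclusion--exclusion over the $O(1)$ subsets of a walk's internal vertex set lets you \emph{count} the disjoint partners of a new walk in $\tilo(1)$ time, but to \emph{list} them you must somehow avoid touching the non-disjoint walks retrieved from the per-vertex lists, and those are precisely the pairs that are not charged to $t$ (nor to the walk count). Your fallback rests on the claim that a bucket with $s>\tau$ walks forces $\Omega(s^2/\poly(m))$ copies of $C_{2k}$ through its (cap, endpoint) pair, and this claim is false. A pairwise-intersecting family of internally disjoint arcs is a ``star'': all walks in the bucket can pass through one common internal vertex. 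Concretely for $k=3$, take vertex-disjoint copies of $K_{2,s}$ with $s=\Theta(\sqrt m)$, join a new vertex $u$ to one side-vertex $a_i$ of each copy, and pad $u$ with pendant edges so it has maximum degree: every bucket $(u,x_i)$ contains $s$ capped $3$-walks $u\text{--}a_i\text{--}b\text{--}x_i$, all sharing $a_i$, while the graph contains no $6$-cycles at all, so the $\Theta(s^2)$ pairing work in such a bucket cannot be charged to $t$. (In this particular example the aggregate pairing cost happens to be about $m^{3/2}$, but you have no argument that the aggregate cost is always $\tilo(m^{2k/(k+1)}+t)$, and the per-bucket charging you propose is unavailable.) The paper avoids pairing walks altogether: in \cref{lem:listlist} it processes vertices $v_i$ in increasing $\succ$-order, builds the $k$-step BFS subgraph $G_i$ on vertices $\preceq v_i$ (whose size is at most the number of capped $k$-walks out of $v_i$), and uses color coding to list the $C_{2k}$'s through $v_i$ in time $O(|E(G_i)|+t_i)$; summing gives $\tilo(W+m^2/\Delta+t)$, which together with the conjecture-based capped-walk bound yields the claimed $\tilo(m^{2k/(k+1)}+t)$. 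To repair your write-up you would either need to adopt such an output-sensitive local listing step (color coding or an equivalent), or supply a genuinely new aggregate charging argument for the pairing step; as written, the proof is incomplete.
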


In \cref{thm:conditional_listing} we give a simple proof of Jin and Zhou's fact above.
Thus, an approach to obtaining the conjectured optimal running time of $\tilo(t+m^{2k/(k+1)})$ for $C_{2k}$-listing would be to prove \cref{conj:supersat}. Unfortunately,
establishing or refuting \cref{conj:supersat} remains a challenging open problem.

Our main technical contribution is a proof of a weaker version of
\cref{conj:supersat} for $k=3$. Then, we show that this partial progress towards
\cref{conj:supersat} can be used in our simplified method for analyzing capped
$k$-walks to obtain a bound on the number of capped $k$-walks, and
consequentially an improved $C_6$ listing algorithm in the sparse setting
(namely, \cref{thm:listfast}).

\subsection{Open Questions}
Proving or refuting \cref{conj:supersat} is an important open question.
It also is valuable to consider other avenues towards obtaining
$C_{2k}$ listing algorithms, especially in case \cref{conj:supersat} turns out
to be false.
The listing algorithms of \cite{C6sCe}, \cite{3sumLBCe}, \cite{Ab22} use a
variety of combinatorial insights that could potentially be generalized to
larger $k$.
It is also possible that a hybrid approach is productive: one can first use
progress towards proving \cref{conj:supersat} to force the instance to have a
specific structure, and then use different combinatorial insights to solve the
structured version of the problem.

\subsection{Paper Outline}
In \cref{sec:simple_capped} we present our simplified analysis of \cite{Kn17}'s
$C_{2k}$ detection algorithm. 
In \cref{sec:listing_with_supersat} we show how to
modify our simple analysis of \cite{Kn17}'s $C_{2k}$ detection algorithm to get
a $C_{2k}$ listing algorithm, assuming the Unbalanced Supersaturation
Conjecture. In \cref{sec:supsersat_progress} we present progress towards
resolving the Unbalanced Supersaturation Conjecture. In
\cref{sec:listing_progress} we show how to use our progress towards
\cref{conj:supersat} in our simplified capped $k$-walk analysis to obtain a
listing algorithm for $C_6$'s with running time $\tilo(m^{1.6}+t)$.

\subsection{Notations}
We use $\Delta(G)$ to denote the maximum degree of graph $G$. 
We write $N(v)$ to denote the neighborhood of vertex $v$, and for $S\subseteq
V(G)$ we write $N_S(v)$ to denote $N(v)\cap S$. 
For graph $G$, we will use $V(G),E(G)$ to denote the vertex and edge sets of $G$. When $G$ is clear from the context we also denote $V(G)$ by $V$ and $E(G)$ by $E$. We let
$n=|V|,m=|E|$ (when the graph being discussed is clear), and assume $m\ge n$. 
For vertex subsets  $A,B$ we write $e(A,B)$ to denote $|E\cap (A\times B)|$.
A walk / path of length $k$ will refer to a walk / path with $k$ edges.
Given $A,B\subseteq V$, we will write $G[A,B]$ to denote the induced subgraph on
$A,B$, i.e., a graph with vertex set $A\cup B$ and edge set $E(G)\cap (A\times B)$.
We will write $G[A]$ to denote $G[A,A]$.

We write $[x]$ to denote the set $\set{1,2,\ldots, \ceil{x}}$. 
We use $\log$ to denote
the base-$2$ logarithm. Define the \defn{dyadic intervals} $\mc{I}_n$ as
follows: letting $n'$ denote the power of two in $[n,2n)$, we define $\mc{I}_n =
(\set{0},\set{1}, [2,4), [4,8), \ldots, [n'/4,n'/2), [n'/2,n'])$. Note that $|\mc{I}_n|
= 1+\log n'$. For $j\in [\log n']$, the $j$-th dyadic interval refers to the
$j$-th set in the list $\mc{I}_n$ of dyadic intervals, \emph{starting counting
from $\set{1}$}, or in other words \emph{excluding the set $\set{0}$}. That is, the $j$-th dyadic interval is $[2^{j-1},2^j)$ for $j\in [(\log n')-1]$ and the $(\log n')$-th one is $[n'/2,n']$.

\section{A Simpler Analysis of the DKS $C_{2k}$ Detection Algorithm}\label{sec:simple_capped}
Dahlgaard, Knudsen and St\"ockel \cite{Kn17} define \defn{capped $k$-walks}
and give a sophisticated analysis to bound the number of capped $k$-walks in
$C_{2k}$-free graphs (assuming $\Delta(G)\le m^{2/(k+1)}$). In this section we
provide a simpler --- although quantitatively worse by logarithmic factors ---
version of their analysis. This will be useful in future sections when we 
analyze the number of capped $k$-walks in graphs that are not $C_{2k}$-free.

Given a total ordering $\succ$ of the vertices in a graph, a
\defn{$\succ$-capped $k$-walk} \cite{Kn17} is a walk $x_0, x_1, \ldots, x_k$
where $x_0\succ  x_i$ for all $i\in [k]$. 
Throughout the paper we will assume implicitly an ordering $\succ$ of the
vertices that satisfies $v_i\succ v_j$ whenever $\deg(v_i)> \deg(v_j)$ (but is otherwise arbitrary). Thus, we will refer simply to capped $k$-walks (leaving
the dependence on $\succ$ implicit).
The main result of this section is a simplified proof of the following fact:

\thmcappedwalks
We note that if $G$ were $(m/n)$-regular then it would contain $\Theta(n(m/n)^k)
\le \O(m^{2k/(k+1)})$ $k$-walks due to $m\le \O(n^{1+1/k})$ (which holds because
$G$ is $C_{2k}$-free). While a non-regular
graph $G$ may have more than $\O(m^{2k/(k+1)})$ many $k$-walks,
\cref{thm:cappedwalks} states
that $G$ does not have more than this many \emph{capped} $k$-walks
(assuming $\Delta(G)\le m^{2/(k+1)}$). 
At a high level, the proof of the bound on capped $k$-walks follows from
applying the bipartite generalization of Bondy and Simonovits' extremal number
bound \cite{evenextremal} (\cref{fact:bipartiteextremal}) to a certain
\defn{layered organization} of the graph with nice regularity properties.
This layered organization of the graph is our contribution. 

In \cite{Kn17}, the
authors took a different approach: they defined the ``$\phi$-norm'' of a vector
$v$ as
\[ ||v||_\phi = \int_0^{\infty}\sqrt{|\setof{i}{|v_i|\ge x}|} dx. \] 
They then related the $\phi$-norm of $X_G^{k}\one$, (where $X_G$ is the graph's
adjacency matrix and $\one$ is the all-ones vector) to the number of capped
$k$-walks in $G$, and analyzed a large set of inequalities using combinatorial
insights to bound this $\phi$-norm. 
We find that the layered organization of the graph technique is more helpful
than the $\phi$-norm approach in elucidating \cite{Kn17}'s elegant result, and
that this layered organization technique is easier to use in bounding the number
of capped $k$-walks in graphs which are not $C_{2k}$-free.
Our layered organization of the graph is depicted in \cref{fig:buckets}, and
formally defined and analyzed in \cref{lem:DKSbuckets}.
\begin{figure}[h]
    \centering
    \includegraphics[width=.8\linewidth]{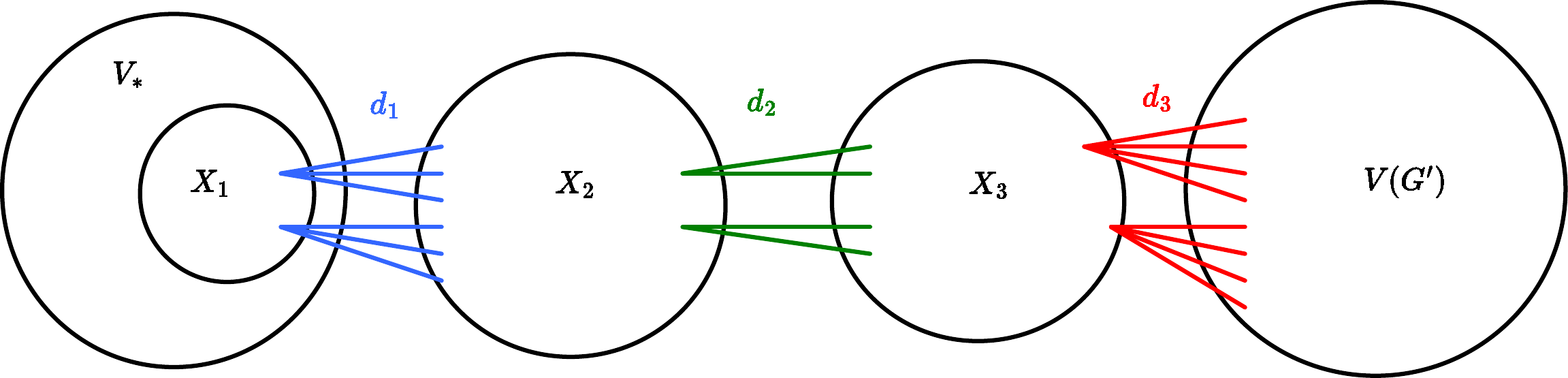}
    \caption{A layer pattern capturing many capped $3$-walks.}
\label{fig:buckets}
\end{figure}

\begin{lemma}
\label{lem:DKSbuckets}
Fix graph $G$. There exists $d_*\in [n]$ such that if we let $V_*$ denote the
set of vertices with degree between $[d_*/2, d_*]$ and let $G'$ denote the
induced subgraph of $G$ on vertices with degree at most $d_*$, then 
there 
exist $(X_1,\ldots, X_k)\subseteq V_*\times V(G')^{k-1}, (d_1,\ldots,
d_k)\in \N^k$ such that letting $X_{k+1}=V(G')$,
\begin{enumerate}
\item The number of capped $k$-walks in $G$ is at most $O(\log^{k+1} n)$ times 
the number of $k$-walks in $X_1\times X_2\times \cdots \times X_k\times V(G')$, 
\item For each $i\in [k]$, every $v\in X_i$ has $|N_{X_{i+1}}(v)|\in [d_i/2, d_i]$.
\end{enumerate}
\end{lemma}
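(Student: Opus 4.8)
The plan is to prove the lemma by a short chain of pigeonhole steps: first fix the degree scale $d_*$ (and hence $V_*$ and $G'$), and then carve out the layers $X_k, X_{k-1}, \dots, X_1$ one at a time, working from the \emph{end} of the walk back towards its start.

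First I would bucket the vertices of $G$ by degree into the $O(\log n)$ dyadic intervals $\mc I_n$. For a $\succ$-capped $k$-walk $x_0, x_1, \dots, x_k$ we have $\deg(x_0)\ge \deg(x_i)$ for every $i\in[k]$: if $\deg(x_i)>\deg(x_0)$ then $x_i\succ x_0$, contradicting $x_0\succ x_i$ since $\succ$ is a total order. Picking the degree bucket that starts the largest number of capped $k$-walks and letting $d_*$ be (essentially) its upper endpoint costs only a factor $O(\log n)$ and lets us assume $x_0\in V_*$. The crucial consequence of capping is that the walk is then confined to $G'$: every $x_i$ has degree at most $\deg(x_0)\le d_*$, so every $x_i$ is a vertex of $G'$, and since $G'$ is an induced subgraph every edge of the walk survives in $G'$. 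Hence the number of capped $k$-walks in $G$ is at most $O(\log n)\cdot N_0$, where $N_0$ denotes the number of (not necessarily capped) $k$-walks in $G'$ whose first vertex lies in $V_*$.

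Next I would peel off the layers. Set $X_{k+1}:=V(G')$ and let $\mc W$ be the set of all $N_0$ walks above. For $i=k,k-1,\dots,1$ in this order, with $X_{i+1},\dots,X_{k+1}$ already defined, partition the current $\mc W$ according to which interval of $\mc I_n$ contains $|N_{X_{i+1}}(x_{i-1})|$; this is a well-defined integer in $[1,n]$, the lower bound holding because the successor $x_i$ of $x_{i-1}$ already lies in $X_{i+1}$ (ensured in round $i+1$, or trivially when $i=k$). Keep a largest class, losing another factor $O(\log n)$, let $d_i$ be the upper endpoint of its interval, and set $X_i:=\setof{v\in V(G')}{|N_{X_{i+1}}(v)|\in[d_i/2,d_i]}$ for $i\ge 2$ and $X_1:=\setof{v\in V_*}{|N_{X_2}(v)|\in[d_1/2,d_1]}$. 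Every walk retained in this round then has $x_{i-1}\in X_i$, every $v\in X_i$ meets the degree condition of the second item by construction, and since $X_i$ and $X_{i+1}$ are frozen from this point on, that condition is never spoiled by later rounds. After all $k$ rounds, $\mc W$ consists of $k$-walks lying in $X_1\times X_2\times\cdots\times X_k\times V(G')$ with $|\mc W|\ge N_0/O(\log^k n)$; combined with the first step this yields the first item with the claimed $O(\log^{k+1}n)$ factor, and the second item is immediate from the construction.

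The one genuine subtlety, and the reason the layers must be produced back-to-front, is that the second item pins down $|N_{X_{i+1}}(v)|$ for $v\in X_i$, so $X_{i+1}$ must be fixed before $X_i$ is defined --- otherwise, fixing $X_1$ first via degrees into all of $V(G')$ and only later shrinking $V(G')$ down to $X_2$ could destroy the degree-uniformity of $X_1$. Apart from this ordering issue and the observation that capping traps a walk inside $G'$, everything else is a routine iteration of dyadic pigeonhole, and the main work is bookkeeping: tracking which coordinates of the walk have already been assigned to which layer at each round.
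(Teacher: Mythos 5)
Your proof is correct and follows essentially the same route as the paper: fix the start-vertex degree scale by dyadic pigeonhole (losing one $\log n$ factor and confining capped walks to $G'$), then build the layers $X_k,\dots,X_1$ from right to left, each time bucketing by $|N_{X_{i+1}}(\cdot)|$ and losing another $O(\log n)$. The only cosmetic difference is that you greedily keep the largest class round by round, while the paper defines the nested sets $W^{(\ell)}_{i_k,\dots,i_\ell}$ for all index tuples and selects the best tuple at the end; both yield the same $O(\log^{k+1} n)$ loss and the same degree-regularity guarantee.
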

\begin{proof}
For $j\in [\log n]$, let $V_j$ denote the set of vertices $v\in V$ with degree
lying in the $j$-th dyadic interval.
Fix $j$ maximizing the number of capped $k$-walks in $G$ that start in a vertex
in $V_j$. Let $V_*=V_j, d_* = 2^{j}$, 
and let $G'$ be the induced subgraph
on vertices of degree at most $d_*$.
The number of capped $k$-walks in $G$ is at most $O(\log n)$ times the number of
$k$-walks in $G'$ that start in $V_*$; this is because the vertices in
$V(G)\setminus V(G')$ are of too high degree to participate in a capped $k$-walk
starting in $V_*$, and because every capped $k$-walk must start in some $V_j$.


For each $i_k\in [\log n]$ let $W^{(k)}_{i_k}$ denote the set of vertices in
$G'$ whose degree in $G'$ lies in the $i_k$-th dyadic interval.
Now, iteratively for $\ell=k-1, k-2, \ldots, 2$, for each $i_\ell \in [\log n]$ define 
$W^{(\ell)}_{i_k, i_{k-1}, \ldots, i_\ell}$ to be the set of vertices $v\in
V(G')$ such that the number of neighbors of $v$ in $W^{(\ell+1)}_{i_k,
i_{k-1},\ldots, i_{\ell+1}}$ lies in the $i_\ell$-th dyadic interval.
Finally, for each $i_1 \in [\log n]$ define 
$W^{(1)}_{i_k, i_{k-1}, \ldots, i_1}$ to be the set of vertices $v\in V_*$ such
that the number of neighbors of $v$ in $W^{(2)}_{i_k, i_{k-1},\ldots, i_{2}}$
lies in the $i_1$-th dyadic interval.

Intuitively, looking at Figure \ref{fig:buckets}, we are building up the sets $X_\ell$ from right to left. Here we think of $W^{(\ell)}_{i_k, i_{k-1}, \ldots, i_\ell}$ as representing $X_\ell$ for some choice of $i_k, i_{k-1}, \ldots, i_\ell$ that we will fix shortly.

Now, fix a choice $(i_k,\ldots,i_1)\in [\log n]^{k}$ that maximizes the number of $k$-walks  
\begin{equation}
  \label{eq:fancyform}
(x_1,\ldots, x_{k+1})\in W^{(1)}_{i_k,\ldots, i_1} \times
W^{(2)}_{i_{k},\ldots, i_2}\times \ldots, W^{(k)}_{i_k}\times V(G').
\end{equation}
Note that every $k$-walk in $G'$ starting in $V_*$ must be of the form
described in \cref{eq:fancyform} for \emph{some} $(i_k,\ldots,i_1)\in [\log n]^{k}$.
Thus, for our choice of $(i_k,\ldots,i_1)$, at least an 
$\Omega(1/\log^{k}n)$-fraction of the $k$-walks in $G'$ starting in $V_*$ are of
the form \cref{eq:fancyform}.
For this choice of $(i_k,\ldots,i_1)$, we define for each $j\in [\log n]$,
\[X_j = W^{(j)}_{i_k,\ldots, i_j}, \quad d_j = 2^{i_j}.\]
These are the sets $X_j,d_j$ that we wanted in the lemma statement, and they
satisfy the required degree regularity property. Also because an 
$\Omega(1/\log^{k} n)$-fraction of the $k$-walks in $G'$ starting in $V_*$ are
of the form \cref{eq:fancyform}, and because the number of capped
$k$-walks in $G$ is at most $O(\log n)$ times the number of $k$-walks in $G'$
starting in $V_*$, we have that the number of capped $k$-walks in $G$ is at most
$\tilo(1)$ times the number of $k$-walks of the form \cref{eq:fancyform}.

\end{proof}

We have shown in \cref{lem:DKSbuckets} that in order to bound the number of
capped-$k$ walks in $G$, it suffices to bound the number of $k$-walks where each
vertex of the $k$-walk is required to belong to a specific ``layer'' set, where
consecutive layers satisfy a degree regularity property.
This motivates studying how large an edge density we can have between two such
consecutive layers while avoiding $C_{2k}$'s. 
First, we need a minor modification of \cref{fact:bipartiteextremal}.
\begin{corollary}\label{cor:non-disjoint-extreme}
Let $G$ be a $C_{2k}$-free graph. Fix $A,B\subseteq V(G)$ \emph{not necessarily
disjoint}. Then, 
\[e(A,B) \le O(|A|+|B|+(|A||B|)^{(k+1)/(2k)}).\]
\end{corollary}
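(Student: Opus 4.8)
The plan is to reduce to results already in hand: \cref{fact:bipartiteextremal} for the genuinely bipartite pieces, and the non-bipartite Bondy--Simonovits bound \cite{evenextremal} for the ``diagonal'' piece. The naive idea of turning $A\cup B$ into an honest bipartite graph by splitting every vertex of $A\cap B$ into an $A$-copy and a $B$-copy does \emph{not} work: a $C_{2k}$ in the split graph that uses both copies of some vertex only projects to a closed \emph{walk} of length $2k$ in $G$, not to a $C_{2k}$, so $C_{2k}$-freeness of $G$ would not transfer to the split graph. Instead I would strip off the overlap by hand and decompose $e(A,B)$ accordingly.

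Write $S=A\cap B$. Since $A\times B$ is the disjoint union of $(A\setminus B)\times B$, $S\times(B\setminus A)$, and $S\times S$, the paper's convention $e(X,Y)=|E\cap(X\times Y)|$ gives
\[ e(A,B)=e(A\setminus B,\,B)+e(S,\,B\setminus A)+e(S,S). \]
In the first two terms the two index sets are disjoint, so $G[A\setminus B,B]$ and $G[S,B\setminus A]$ are honest bipartite subgraphs of $G$, hence $C_{2k}$-free, and their edge counts equal $e(A\setminus B,B)$ and $e(S,B\setminus A)$ respectively; by \cref{fact:bipartiteextremal} (contrapositive) each is at most $100k(L+R+(LR)^{(k+1)/(2k)})$ with $L,R$ the part sizes, and using $|A\setminus B|,|S|\le|A|$ together with $|B\setminus A|,|S|\le|B|$, each is at most $100k\big(|A|+|B|+(|A||B|)^{(k+1)/(2k)}\big)$. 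For the third term, $e(S,S)\le 2|E(G[S])|$, and $G[S]$ is a $C_{2k}$-free graph on $|S|$ vertices, so Bondy--Simonovits \cite{evenextremal} gives $|E(G[S])|\le 100k\,|S|^{1+1/k}$.

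The only thing needing a moment of care is that the diagonal term is absorbed: since $S\subseteq A$ and $S\subseteq B$ we have $|S|^2\le|A||B|$, hence $|S|^{1+1/k}=(|S|^2)^{(k+1)/(2k)}\le(|A||B|)^{(k+1)/(2k)}$, so $e(S,S)\le 200k\,(|A||B|)^{(k+1)/(2k)}$. Summing the three contributions yields $e(A,B)\le 400k\big(|A|+|B|+(|A||B|)^{(k+1)/(2k)}\big)=O\big(|A|+|B|+(|A||B|)^{(k+1)/(2k)}\big)$, as claimed, the hidden constant depending only on the fixed constant $k$. There is no real obstacle beyond recognizing that the overlap must be removed by a direct decomposition rather than by a double cover, and noticing that $|A\cap B|^{1+1/k}$ is always dominated by $(|A||B|)^{(k+1)/(2k)}$.
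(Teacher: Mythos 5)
Your proof is correct, but it takes a genuinely different route from the paper. The paper handles the overlap $C=A\cap B$ probabilistically: it initializes $L=A\setminus C$, $R=B\setminus C$, assigns each vertex of $C$ to $L$ or $R$ with probability $1/2$, and notes that in expectation at least half the edges of $A\times B$ survive as edges of the bipartite graph $G[L,R]$; fixing a good outcome, a single application of \cref{fact:bipartiteextremal} to this subgraph bounds $e(A,B)/2$ by $O(|L|+|R|+(|L||R|)^{(k+1)/(2k)})$, and $|L|\le|A|$, $|R|\le|B|$ finishes the argument. You instead decompose deterministically, writing $e(A,B)=e(A\setminus B,B)+e(S,B\setminus A)+e(S,S)$ with $S=A\cap B$, apply \cref{fact:bipartiteextremal} to the two honestly bipartite pieces, and control the diagonal piece via the original (non-bipartite) Bondy--Simonovits theorem \cite{evenextremal} together with the observation $|S|^{1+1/k}=(|S|^2)^{(k+1)/(2k)}\le(|A||B|)^{(k+1)/(2k)}$; all steps check out, including the disjointness of the pairs $(A\setminus B,B)$ and $(S,B\setminus A)$ and the absorption of the diagonal term. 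What each approach buys: yours is fully deterministic and elementary (no probabilistic-method step), at the cost of invoking a second extremal ingredient (the non-bipartite bound on $G[S]$), whereas the paper's random-bipartition trick needs only \cref{fact:bipartiteextremal} plus the standard ``every graph has a bipartite subgraph with half its edges'' idea, and that same trick is reused verbatim later (in \cref{cor:disjointpartialsupersat} and \cref{cor:supersatnondisjoint}) where supersaturation, not just extremal edge counts, must be preserved -- a setting where your diagonal-term argument would not transfer as directly. Your opening remark about why naive vertex-splitting fails is accurate, but note the paper's construction is not a double cover: each overlap vertex is placed on exactly one side, so no cycle-projection issue arises there either.
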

\begin{proof}
A well known fact (see e.g., \cite{zhao2023graph}) in graph theory is that every
$m$-edge graph has a bipartite subgraph with at least $m/2$ edges. 
Here we can apply the same idea. Let $C=A\cap B, A'=A\setminus C, B'=B\setminus C$. Form a partition of $G$ as follows. Initialize $L=A',R=B'$ and then for every $c\in C$ place $c$ in $L$ with probability $1/2$ and in $R$ otherwise. Then remove all edges not in $L\times R$. In expectation, at least half of the edges in $A\times B$ are preserved. Hence some setting of the random bits gives a bipartite subgraph with at least half the edges from $A\times B$ on which we can apply \cref{fact:bipartiteextremal} to upper-bound $e(A,B)/2$ by  $O(|L|+|R|+(|L||R|)^{(k+1)/(2k)})$ which is the desired quantity as $|L|\leq |A|$ and $|R|\leq |B|$.
\end{proof}
Now we prove a lemma to explain how consecutive layers in the
decomposition of \cref{lem:DKSbuckets} interact.
\begin{lemma} \label{lem:keylem}
Let $G$ be a $C_{2k}$-free graph with $\Delta(G)\le m^{2/(k+1)}$. Fix
$A\subseteq V(G)$ and $d\le \Delta(G)$. Let $B$ denote the set of
vertices $v\in V(G)$ with $|N_A(v)|\in [d/2, d]$. Then,
\[ d\sqrt{|B|}\le \O(m^{\frac{1}{k+1}}\sqrt{|A|}). \] 
\end{lemma}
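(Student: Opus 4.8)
The plan is to reduce the whole statement to four elementary inequalities about $e(A,B)$ together with a three‑way case split, the only subtle point being one of the cases.

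First I would clear away degeneracies: if $d<1$ then $[d/2,d]$ contains no positive integer, so $B=\emptyset$ (or $d=0$ and the left‑hand side is $0$), and if $|A|=0$ then again $B=\emptyset$; hence we may assume $d\ge 1$ and $|A|,|B|,m\ge 1$, and every $v\in B$ then has $|N_A(v)|\ge d/2\ge 1/2$, i.e. $|N_A(v)|\ge 1$. I then record: (i) $e(A,B)=\Omega(|B|d)$, since each vertex of $B$ has at least $d/2$ neighbors in $A$ (the $\Omega$ absorbs a factor $2$ due to the possible overlap of $A$ and $B$); (ii) $e(A,B)\le |A|\,\Delta(G)\le |A|\,m^{2/(k+1)}$; (iii) $|B|\,d=O(m)$, because $\sum_{v\in B}\deg_G(v)\le 2m$ while every $v\in B$ has degree $\ge d/2$; and (iv) by \cref{cor:non-disjoint-extreme} — which is exactly what lets us handle $A,B$ not necessarily disjoint — $e(A,B)=O\!\big(|A|+|B|+(|A||B|)^{(k+1)/(2k)}\big)$. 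Combining (i) with (iv) gives $|B|d=O\!\big(\max\{|A|,\,|B|,\,(|A||B|)^{(k+1)/(2k)}\}\big)$, and I split into cases according to which term attains the maximum.

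The first two cases are quick. If $|A|$ is the largest term, then $|B|d=O(|A|)$; multiplying by $d$ and using $d\le\Delta(G)\le m^{2/(k+1)}$ gives $d^{2}|B|=O(m^{2/(k+1)}|A|)$, which is the claim after taking square roots. If $|B|$ is the largest term, then $|B|d=O(|B|)$ forces $d=O(1)$, while (i) and (ii) give $|B|=O(|A|\,m^{2/(k+1)})$, so $d^{2}|B|=O(|B|)=O(m^{2/(k+1)}|A|)$ as well. The interesting case is when $(|A||B|)^{(k+1)/(2k)}$ is the largest term. Here $|B|d=O\!\big((|A||B|)^{(k+1)/(2k)}\big)$, which I rewrite as $d\,|B|^{(k-1)/(2k)}=O\!\big(|A|^{(k+1)/(2k)}\big)$. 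Together with (iii), $d\,|B|=O(m)$, I have two bounds, and the key observation is that the quantity I want, $d\sqrt{|B|}$, is precisely the ``right'' geometric mean of their left‑hand sides: for positive reals one has the identity $d\sqrt{|B|}=\big(d\,|B|^{(k-1)/(2k)}\big)^{k/(k+1)}\big(d\,|B|\big)^{1/(k+1)}$, since the exponents of $d$ sum to $1$ and those of $|B|$ to $\tfrac{k-1}{2(k+1)}+\tfrac{1}{k+1}=\tfrac12$. Substituting the two bounds yields $d\sqrt{|B|}=O\!\big(|A|^{\frac{k+1}{2k}\cdot\frac{k}{k+1}}\,m^{1/(k+1)}\big)=O\!\big(m^{1/(k+1)}\sqrt{|A|}\big)$, as desired.

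I expect the one genuinely nonobvious point to be this last case. The naive attempt — bound $d\le\Delta(G)\le m^{2/(k+1)}$ and bound $|B|$ via the extremal inequality (iv) — reduces the claim to $|A||B|=O(m^{2k/(k+1)})$, which is simply false in general (e.g. $|A|,|B|=\Theta(n)$ with $m=\Theta(n)$), so a more careful argument is needed. The fix is to never estimate $|A||B|$ at all, but instead to interpolate between the $C_{2k}$‑free density bound (iv) and the crude global edge bound (iii) with the exact weights $k/(k+1)$ and $1/(k+1)$; recognizing that this particular interpolation is what makes the exponents of $d$ and $|B|$ land on $1$ and $1/2$ is the crux of the proof. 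Everything else is routine bookkeeping with $O(\cdot)$ constants, including the factor‑$2$ slack from $A$ and $B$ overlapping, which \cref{cor:non-disjoint-extreme} is designed to absorb.
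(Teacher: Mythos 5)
Your proposal is correct and follows essentially the same route as the paper: both rest on the degree-regularity bound $e(A,B)=\Omega(|B|d)$, the non-disjoint extremal bound of \cref{cor:non-disjoint-extreme}, the global bound $|B|d\le O(m)$, and $d\le\Delta(G)\le m^{2/(k+1)}$, combined via a case analysis on which term of the extremal bound dominates. The only difference is in the bookkeeping of the main case: where the paper splits on whether $|A|>dm^{1-2/(k+1)}$ and then uses $|B|d\le O(m)$, you combine the two bounds $d|B|^{(k-1)/(2k)}\le O\big(|A|^{(k+1)/(2k)}\big)$ and $d|B|\le O(m)$ by a weighted geometric-mean (H\"older-type) interpolation with weights $k/(k+1)$ and $1/(k+1)$ — an equivalent and arguably cleaner finish.
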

\begin{proof}
Define $a=|A|, b=|B|, \Delta=\Delta(G)$. 
In this proof we will write $x\les y$ to denote $x\le O(y)$. 
We aim to show $d^2 b/a \les m^{2/(k+1)}$. 
This will follow from a simple chain of inequalities, with the main
combinatorial ingredient being \cref{cor:non-disjoint-extreme}.
In the inequalities we will use $\one_{[P]}$ to denote the indicator of whether
$P$ is true (i.e., it evaluates to $1$ or  $0$ based on the truth of proposition
$P$).

We have:
\begin{align*}
  d^2b/a & = \frac{(db)^2}{ab}\\
  &\les \min\left( \frac{d^2b}{a}, \frac{e(A, B)^2}{ab}\right)
&& \text{$B$ to $A$ degree regularity}\\
&\les \min\left( \frac{d^2b}{a}, \frac{a^2+b^2+(a b)^{1+1/k}}{ab} \right)
&& \text{\cref{cor:non-disjoint-extreme}}\\
&\les \min\left( \frac{d^2b}{a}, \frac{a}{b}\right) + \frac{b}{a}+ \min\left(
\frac{d^2b}{a},  (ab)^{1/k} \right)
&& \text{re-arrange terms}\\
&\les \sqrt{ \frac{d^2b}{a}\cdot \frac{a}{b}} + \frac{b}{a}+ \min\left(
\frac{d^2b}{a},  (ab)^{1/k} \right)
&& \min(x,y)\le\sqrt{xy} \\
&\les d + \frac{b}{a}+ \min\left( \frac{d^2b}{a},  (ab)^{1/k} \right)
&& \text{simplify}\\
&\les \Delta + \frac{b}{a}+ \min\left( \frac{d^2b}{a},  (ab)^{1/k} \right)
&& \text{$d\le \Delta$}\\
&\les \Delta + \frac{\Delta\cdot a}{a}+ \min\left( \frac{d^2b}{a},  (ab)^{1/k} \right)
&& b\le\abs{\bigcup_{v\in A}N(v)}\le \Delta\cdot a \\
&\les m^{2/(k+1)} + \min\left( \frac{d^2b}{a},  (ab)^{1/k} \right)
&& \Delta \le m^{2/(k+1)}\\
&\les m^{2/(k+1)} + \one_{[a>d m^{1-2/(k+1)}]}\cdot\frac{d^2b}{a}+\one_{[a\le
dm^{1-2/(k+1)}]}\cdot (ab)^{1/k} 
&& \text{split $\min$ }\\
&\les m^{2/(k+1)} + \frac{d^2b}{dm^{1-2/(k+1)}}+(dm^{1-2/(k+1)}b)^{1/k} 
&& \text{monotonicity}\\
&\les m^{2/(k+1)} + \frac{m}{m^{1-2/(k+1)}}+(m^{2-2/(k+1)})^{1/k} 
&& db \le O(m)\\
&\les m^{2/(k+1)} 
&& \text{simplify.}\\
\end{align*}
\end{proof}

Now we combine \cref{lem:DKSbuckets} and \cref{lem:keylem} to obtain \cref{thm:cappedwalks}.
\begin{proof}[Proof of \cref{thm:cappedwalks}]
We apply \cref{lem:DKSbuckets} to obtain $V_*,d_*, G'$, $X_1,\ldots,
X_k\subseteq V(G'), d_1,\ldots, d_k$ with the properties described in
\cref{lem:DKSbuckets}. We can easily count the number of $k$-walks in $X_1\times \cdots \times
X_k\times V(G')$ due to the regularity property that consecutive $X_i$'s enjoy;
it is at most
\begin{equation}\label{x1proddj}
|X_1| \cdot \prod_{j=1}^k d_j. 
\end{equation}
We now bound \cref{x1proddj} using \cref{lem:keylem}, which implies that for
each $i\in[k-1]$,
\begin{equation}
  \label{eq:adjacentbound}
d_i\sqrt{|X_i|} \le \O(m^{1/(k+1)} \sqrt{|X_{i+1}|}).
\end{equation}
Repeatedly applying \cref{eq:adjacentbound} to \cref{x1proddj} gives:
\begin{equation}
  \label{eq:prodexpand}
|X_1| \prod_{j=1}^{k} d_j \le \sqrt{|X_1| |X_k|} d_k \cdot \O(m^{(k-1)/(k+1)}). 
\end{equation}
To bound \cref{eq:prodexpand} we make two observations.
First, note that $|X_k|d_k \le O(m)$ because all $v\in X_k$ have degree at least
$d_k/2$.
Second, recall that $X_1\subseteq V_*$, and $d_* \ge d_k$. Thus, 
$|X_1|d_k \le |X_1|d_* \le O(m)$.
Applying this in \cref{eq:prodexpand} gives
\begin{equation}
  \label{eq:x1djm2k}
|X_1|\prod_{j=1}^{k} d_j \le \O(m^{2k/(k+1)}).
\end{equation}
By \cref{lem:DKSbuckets}, the number of capped $k$-walks in $G$ is at most
$\tilo(1)$ times  \cref{eq:x1djm2k}; this is the desired bound.
\end{proof}

\cite{Kn17} showed that \cref{thm:cappedwalks} implies a $C_{2k}$ detection algorithm. 
The $C_{2k}$ detection algorithm is based on the following slightly more general
lemma; this lemma is also the basis of our listing algorithm.
\begin{lemma}\label{lem:listlist}
Fix parameter $\Delta\in \N$, and graph $G$. 
Let $W$ be the number of capped $k$-walks in $G$ that start at a vertex with degree at most $\Delta$.
There is a (randomized)
algorithm to list the $C_{2k}$'s in $G$ 
in time $\tilo(m^2/\Delta + W + t).$
\end{lemma}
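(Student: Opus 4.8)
The plan is to classify each $C_{2k}$ by the $\succ$-maximum vertex lying on it and to exploit the decomposition, obtained by rooting the cycle at that vertex, into two $\succ$-capped $k$-walks. Call a vertex \emph{heavy} if its degree exceeds $\Delta$ and \emph{light} otherwise; there are $O(m/\Delta)$ heavy vertices, and since every non-root vertex of a $\succ$-capped walk has degree at most that of the root, a $C_{2k}$ whose $\succ$-maximum vertex is light has all of its vertices of degree at most $\Delta$. Rooting a $C_{2k}=x_0x_1\cdots x_{2k-1}$ at its $\succ$-maximum vertex $x_0$ and cutting at the antipodal vertex $x_k$ splits it into the two length-$k$ walks $x_0,x_1,\dots,x_k$ and $x_0,x_{2k-1},\dots,x_{k+1},x_k$, each of which is $\succ$-capped; conversely, two capped $k$-walks with common endpoints that are simple paths and internally vertex-disjoint glue to a $C_{2k}$. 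Thus the $C_{2k}$'s whose $\succ$-maximum vertex is light correspond to unordered pairs of capped $k$-walks starting at a light vertex --- and there are exactly $W$ such walks --- while for each heavy vertex $v$, processed in decreasing $\succ$-order inside the subgraph induced on $\{w:w\preceq v\}$ (so that any $C_{2k}$ found there is automatically rooted at $v$), the relevant $C_{2k}$'s through $v$ are glued from pairs of internally vertex-disjoint paths of length $k$ from $v$ to $x_k$.

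The crux is output-sensitivity: bucketing the capped $k$-walks by endpoint pair $(a,b)$ and forming all $\binom{p_{a,b}}{2}$ pairs would cost $\sum_{a,b}\binom{p_{a,b}}{2}$, which can vastly exceed $t$ because most pairs of capped walks intersect internally. I would resolve this by color-coding. Repeat $\Theta(\log n)$ rounds; in each round color $V$ uniformly at random from a palette $\{1,\dots,2k-1\}$ and look only for $C_{2k}$'s $x_0x_1\cdots x_{2k-1}$ (rooted at $x_0$) with $c(x_i)=i$ for every $i\in[2k-1]$. A fixed such cycle (with its rooting and either orientation) meets this pattern with probability $(2k-1)^{-(2k-1)}=\Omega(1)$, so after $\Theta(\log n)$ rounds all are found with high probability; duplicates are discarded with a global hash table of emitted cycles, at total cost $\tilo(t)$. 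The payoff of the colorful pattern is that each of the two cut-halves is then automatically a simple path (distinct colors along it) and the two halves are automatically internally vertex-disjoint (their internal vertices carry the disjoint color sets $\{1,\dots,k-1\}$ and $\{k+1,\dots,2k-1\}$). Hence \emph{every} pair of a ``left'' capped $k$-walk (colors $1,2,\dots,k$ along it) and a ``right'' capped $k$-walk (colors $2k-1,2k-2,\dots,k$ along it) sharing both endpoints glues to a genuine $C_{2k}$, and this map from pairs to cycles is injective (the cycle together with the coloring, and the fact that its root is its $\succ$-maximum vertex, recovers the pair). So, after filtering the round's capped $k$-walks into the left and the right family, bucketing each family by $(\mathrm{start},\mathrm{end})$, and, for each bucket $(a,b)$, emitting the $\ell_{a,b}\cdot r_{a,b}$ glued cycles, the round's emission cost is $O\!\big(k\sum_{a,b}\ell_{a,b}r_{a,b}\big)=O(kt)$, since that sum counts distinct cycles; this is the source of the $+t$ term. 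The heavy case is identical except $x_0=v$ is pinned: inside the subgraph on $\{w:w\preceq v\}$ one computes by BFS from $v$ the sets $A_j$ of color-$j$ vertices reachable from $v$ along a colorful path through colors $1,\dots,j$, and symmetrically the sets $B_j$ reaching $v$ backwards through colors $2k-1,\dots,j$, storing predecessor/successor lists so that the backtracking which lists the paths from $v$ to $x_k$ and from $x_k$ to $v$ spends $O(1+\#\text{children})$ per search-tree node; gluing over all midpoints $x_k\in A_k\cap B_k$ costs $\tilo(m+t_v)$ per round per heavy $v$, where $t_v$ is the number of $C_{2k}$'s rooted at $v$ and $\sum_{v\text{ heavy}}t_v\le t$.

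It remains to enumerate, once, all $W$ capped $k$-walks starting at light vertices in time $\tilo(m+W)$ (the light-case rounds then merely re-scan this list, at $\tilo(W)$ per round). Sort every adjacency list by $\succ$, and process the light vertices in increasing $\succ$-order while incrementally maintaining the induced subgraph $H$ on the vertices processed so far; when a light vertex $v$ is reached, $H$ is exactly $G[\{w:w\prec v\}]$, and a $\succ$-capped $k$-walk rooted at $v$ is precisely a neighbor $u_1\prec v$ of $v$ followed by an arbitrary length-$(k-1)$ walk in $H$ from $u_1$. The sorted list yields $N(v)\cap\{w:w\prec v\}$ in time proportional to its size, and --- the key point --- in $H$ any vertex reached from $u_1$ by a nonempty walk has a neighbor in $H$ (its predecessor on that walk), so a depth-$(k-1)$ DFS from $u_1$ never dead-ends beyond the first step; since every length-$j$ walk from $u_1$ with $j\le k-1$ is a truncation of some length-$(k-1)$ walk from $u_1$, the search tree has at most $k$ times as many nodes as leaves, so the DFS runs in $O(k)$ times the number of capped walks it outputs. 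Summing over $u_1\in N(v)$ and over all light $v$, and maintaining $H$ in $O(m)$ total (each vertex, once processed, is appended to its neighbors' $H$-adjacency lists), gives $\tilo(m+W)$. Assembling the three pieces --- $\tilo(m+W)$ to list the walks, $\tilo(W+t)$ for the light case, and $\tilo(m^2/\Delta+t)$ for the heavy case ($\tilo(m)$ per round $\times$ $\Theta(\log n)$ rounds $\times$ $O(m/\Delta)$ heavy vertices, plus $\tilo(t)$ emission) --- and using $\Delta\le m$, yields the claimed $\tilo(m^2/\Delta+W+t)$. I expect the main obstacle to be exactly this output-sensitivity step: arranging the color-coding and the endpoint-bucketing so that each round's work is charged against the cycles it actually outputs rather than against the much larger pool of candidate capped-walk pairs; a secondary technical point is the incremental bookkeeping that lets the capped-walk enumeration avoid exploring prefixes that fail to extend to length $k$.
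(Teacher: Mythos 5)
Your proposal is correct, and its skeleton coincides with the paper's: every $C_{2k}$ is handled at its $\succ$-maximum vertex, color coding makes the search output-sensitive, and the exploration cost is charged to $W$ for light roots and to the $O(m/\Delta)$ heavy vertices at $O(m)$ each otherwise. Where you genuinely diverge is the light case. The paper never enumerates capped walks: for each vertex $v_i$ it forms the subgraph $G_i$ of edges on the $k$-step BFS out of $v_i$ inside $\{v_1,\ldots,v_i\}$, lists the colorful $C_{2k}$'s through $v_i$ in time $O(|E(G_i)|+t_i)$ per round (a standard colorful-listing step it invokes without detail), and then simply observes that $|E(G_i)|$ is at most the number of capped $k$-walks starting at $v_i$, so summing over light and heavy $i$ gives $O(W)$ and $O(m^2/\Delta)$. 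You instead enumerate all $W$ capped walks once (your dead-end-free DFS is sound, and the observation that a capped walk with a light root visits only light vertices is what makes it harmless that your $H$ is really the induced subgraph on the \emph{light} vertices preceding $v$, not on all of them) and glue colorful left/right halves matched by endpoint pairs; the injectivity of this gluing --- the root is strictly $\succ$, hence distinct from, every other vertex by cappedness, and the prescribed colors make the remaining $2k-1$ vertices pairwise distinct --- is exactly what lets you charge the emission to $t$. Your route buys a self-contained, fully explicit proof of the output-sensitivity that the paper delegates to the colorful-listing subroutine, at the price of extra machinery (the global walk list, endpoint bucketing, deduplication); your heavy case, a per-vertex colorful DP with predecessor lists inside $\{w:w\preceq v\}$, is essentially the paper's per-vertex step restricted to heavy roots.
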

\begin{proof}
Label the vertices $v_1,\ldots, v_n$, where $v_i\succ v_j$ iff $i>j$. For each
$i\in [n]$, define a subgraph $G_i$ on vertices $\set{v_1,\ldots, v_i}$,
consisting of the edges that are part of the $k$-step BFS out of vertex $v_i$.
For $i=1,2,\ldots, n$, we will list the $C_{2k}$'s in $G_i$ that involve $v_i$.
Note that this will result in listing all $C_{2k}$'s; in particular,
if $v_{i}$ is the largest (by $\succ$)
vertex in some $C_{2k}$, then we will list this $C_{2k}$ on iteration $i$ of our procedure.
Let $t_i$ denote the number of $C_{2k}$'s in $G_i$ that involve $v_i$; we now
discuss how to list these $C_{2k}$'s.
To list the $C_{2k}$'s in $G_i$ involving $v_i$, we use color coding. We color
each vertex with a random color from $[2k]$. Then we restrict our search to
\defn{colorful} $C_{2k}$'s: $C_{2k}$'s with a vertex of every color. 
Listing the colorful $C_{2k}$'s involving $v_i$ can be done in time $O(|E(G_i)|+t_i)$.
Each $C_{2k}$ becomes colorful with probability $2^{-2k}\ge \Omega(1)$, so by
repeating this $\tilo(1)$ times, we will list all the $C_{2k}$'s in $G_i$
involving $v_i$, with high probability.

Now we analyze the cost of our algorithm. 
Note that $|E(G_i)|$ is at most the number of capped $k$-walks that start at
$v_i$.
Let $T$ be the largest $i$ such that $\deg(v_i)\le \Delta$.
Then, by definition
\[\sum_{i=1}^T |E(G_i)| \le O(W).\]
There are at most $m/\Delta$ vertices of degree larger than $\Delta$.
Thus, 
\[\sum_{i=T+1}^n |E(G_i)| \le O(m^2/\Delta).\]
Finally, we have $\sum_i t_i = O(t)$.
Combined, we see that the algorithm runs in time $\tilo(t+W+m^2/\Delta)$.
\end{proof}

\begin{corollary}\label{cor:detection}
There is a (randomized) $\tilo(m^{2k/(k+1)})$ time algorithm for $C_{2k}$ detection.
\end{corollary}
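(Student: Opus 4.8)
The plan is to instantiate \cref{lem:listlist} with threshold $\Delta = m^{2/(k+1)}$ and convert the resulting listing procedure into a decision procedure. With this choice $m^2/\Delta = m^{2k/(k+1)}$, so \cref{lem:listlist} gives an algorithm running in time $\tilo(m^{2k/(k+1)} + W + t)$, where $W$ is the number of capped $k$-walks of $G$ that start at a vertex of degree at most $\Delta$. The detection algorithm runs this listing procedure with a global step counter capped at $B = \Theta(m^{2k/(k+1)}\log^{c} n)$ for a suitable constant $c$ (absorbing the hidden polylogarithmic factors of \cref{lem:listlist} and of the bound on $W$ below); it reports ``$G$ contains a $C_{2k}$'' the moment a cycle is output or the cap is reached, and reports ``$C_{2k}$-free'' if the procedure halts within $B$ steps having output no cycle. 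Correctness of the first two outcomes is clear (if the procedure halts having output no cycle, then with high probability it has listed all $C_{2k}$'s, of which there are none), so the only thing to check is that hitting the cap happens only when $G$ actually contains a $C_{2k}$; equivalently, that whenever $G$ is $C_{2k}$-free the procedure halts within $B$ steps with high probability. Since $t = 0$ in that case, it suffices to show $W = \tilo(m^{2k/(k+1)})$.

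Bounding $W$ is where the work lies. In a capped $k$-walk $x_0, x_1, \dots, x_k$ with $\deg(x_0) \le \Delta$, every $x_i$ satisfies $\deg(x_i) \le \deg(x_0) \le \Delta$ (since $x_0 \succ x_i$ forces $\deg(x_i) \le \deg(x_0)$, as $\succ$ orders higher-degree vertices first); hence such a walk visits only vertices of degree at most $\Delta = m^{2/(k+1)}$. Grouping by the dyadic interval containing $\deg(x_0)$, it suffices to bound, for each of the $O(\log m)$ dyadic intervals $[2^{j-1}, 2^{j})$ with $2^{j-1} \le \Delta$, the number of capped $k$-walks of $G$ that start in $V_j := \{\, v : \deg_G(v) \in [2^{j-1}, 2^{j})\,\}$. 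This is precisely the quantity the proof of \cref{thm:cappedwalks} controls once the starting degree class is fixed: the opening step of the proof of \cref{lem:DKSbuckets} replaces capped $k$-walks starting in a degree class by $k$-walks in the induced subgraph $G'_j = G[\{\, v : \deg_G(v) \le 2^{j}\,\}]$, and $G'_j$ is $C_{2k}$-free with $\Delta(G'_j) \le 2^{j} \le 2m^{2/(k+1)}$ and at most $m$ edges. Inspecting the remainder of the argument (\cref{lem:DKSbuckets}, \cref{lem:keylem}, and the proof of \cref{thm:cappedwalks}), one sees that it uses of its working graph only that it is $C_{2k}$-free, that its maximum degree is $O(m^{2/(k+1)})$, and that it has $O(m)$ edges — never that the edge count equals $m$ exactly — so it still yields a bound of $\tilo(m^{2k/(k+1)})$ for each $j$. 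Summing the $O(\log m)$ contributions gives $W = \tilo(m^{2k/(k+1)})$, as needed.

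The main obstacle is exactly this last point of bookkeeping: \cref{thm:cappedwalks} is stated only for graphs with $\Delta(G) \le m^{2/(k+1)}$, while the instance may have vertices of larger degree; those are already paid for by the $m^2/\Delta$ term of \cref{lem:listlist}, and one only has to verify that restricting attention to low-degree start vertices leaves the capped-walk analysis intact. (An alternative route to the same end: first delete every vertex of degree $> m^{2/(k+1)}$ — there are at most $2m^{(k-1)/(k+1)}$ of them — after testing each, in $\tilo(m)$ time via color coding, for membership in a $C_{2k}$; this costs $m^{(k-1)/(k+1)} \cdot \tilo(m) = \tilo(m^{2k/(k+1)})$, is correct because a deleted vertex lying on a $C_{2k}$ is detected directly, and leaves a graph to which \cref{thm:cappedwalks} applies verbatim.) Everything else is immediate from \cref{lem:listlist} and \cref{thm:cappedwalks}.
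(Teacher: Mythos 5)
Your proposal is correct and follows essentially the same route as the paper: apply \cref{lem:listlist} with $\Delta=m^{2/(k+1)}$ and terminate the run once it exceeds the $\tilo(m^{2k/(k+1)})$ budget, with correctness in the $C_{2k}$-free case resting on the capped-walk bound of \cref{thm:cappedwalks}. The extra care you take in checking that capped $k$-walks starting at degree-$\le\Delta$ vertices stay among low-degree vertices (so the analysis behind \cref{lem:DKSbuckets} and \cref{lem:keylem} applies even though $G$ itself may violate $\Delta(G)\le m^{2/(k+1)}$) is a technicality the paper leaves implicit, and your treatment of it is sound.
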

\begin{proof}
This follows by applying \cref{lem:listlist} with $\Delta=m^{2/(k+1)}$ (and
terminating the algorithm if it runs for too long).
\end{proof}

 \cref{lem:listlist} and \cref{cor:detection} can be derandomized using standard techniques (perfect hash families etc) \cite{colorcoding}.


\newcommand{\hex}{\mathsf{hex}}
\section{Progress Towards the Supersaturation Conjecture}\label{sec:supsersat_progress}
Jin and Zhou observed (\cite{supersat_observation}) that the unbalanced
supersaturation conjecture implies sparse listing algorithms for even cycles of
any length; we give a simple proof of this in \cref{sec:listing_with_supersat}.
In this section we present partial progress towards proving the unbalanced
supersaturation conjecture. We
restrict our attention to \defn{hexagons} ($C_6$'s) because this is the
smallest case where sparse listing algorithms are not known. In
\cref{sec:listing_progress} we will show that our partial progress towards the supersaturation
conjecture can be translated into partial progress towards listing hexagons in
time $O(m^{1.5}+t)$. 
Before stating our result, we state some simple folklore facts about path
supersaturation (see \cref{appendix:path-facts} for proofs). 
\begin{fact}\label{fact:p2}
Let $G$ be a bipartite graph with parts $A,B$ of sizes $L,R$. If $m\ge 2R$, then
the number of $2$-paths in $A\times B\times A$ is at least $L(m/L)(m/R)/2$.
\end{fact}
\begin{fact}\label{fact:p4}
Let $G$ be a bipartite graph with parts $A,B$ of sizes $L,R$. If $m\ge 50(R+L)$, then
the number of $4$-paths in $A\times B\times A\times B\times A$ is at least
$\Omega(L(m/L)^2(m/R)^2)$.
\end{fact}
Our partial supersaturation result for hexagons is as follows:
\begin{theorem}\label{thm:partialsupersat}
Let $G$ be a bipartite graph with vertex bipartition $A\sqcup B$ and $t$
hexagons. Let $L=|A|, R=|B|$ with $100\le L\le R$, and let $d_L=m/L, d_R=m/R$. Suppose 
$m/\ceil{\log(L+R)}^{10}>L+R+(LR)^{2/3}$. 
Then, there is an absolute constant $c>0$ such that 
\[ t \ge d_R^3 d_L^3 \cdot \min(1, d_R^3/L, d_L^2d_R^2/L^2)\cdot c/\log^{70} n. \] 
\end{theorem}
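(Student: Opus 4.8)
The goal is to lower-bound the number $t$ of hexagons in a bipartite graph $G = (A \sqcup B, E)$ with $|A| = L \le R = |B|$ and $m$ edges, under the density hypothesis $m / \lceil \log(L+R)\rceil^{10} > L + R + (LR)^{2/3}$. Our target is
\[
t \ge d_R^3 d_L^3 \cdot \min\!\left(1,\ d_R^3/L,\ d_L^2 d_R^2/L^2\right) \cdot \Omega(1/\log^{70} n),
\]
where $d_L = m/L$, $d_R = m/R$. The natural strategy, following the framework already in the paper: count hexagons by counting closed $6$-walks of the right "shape" and relating that count to $t$ via a path-counting / Cauchy–Schwarz argument. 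A hexagon with vertices alternating $A,B,A,B,A,B$ can be built by gluing together shorter structures; the cleanest approach is to count pairs consisting of a $4$-path $P = a_1 b_1 a_2 b_2 a_3$ (in $A\times B\times A\times B\times A$) together with the number of common neighbors of its endpoints $a_1, a_3$ in $B$, since each hexagon decomposes into a $4$-path plus a length-$2$ path through $B$ connecting its endpoints. So the count is $\sum_{a, a' \in A} p_4(a,a') \cdot \mathrm{codeg}_B(a, a')$ where $p_4(a,a')$ is the number of such $4$-paths, and this overcounts $t$ by only a constant (number of automorphisms / rooting choices), modulo subtracting off degenerate walks.

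**Key steps.** First I would invoke \cref{fact:p4} to get a lower bound on the \emph{total} number $N_4$ of $4$-paths in $A\times B\times A\times B\times A$: it is $\Omega(L d_L^2 d_R^2)$. Second, I would apply \cref{fact:p2} (or a direct Cauchy–Schwarz) to understand the codegree distribution in $B$ of pairs in $A$: the number of $2$-paths $a b a'$ with $b \in B$, i.e. $\sum_{a,a'} \mathrm{codeg}_B(a,a')$, is $\Omega(L d_L d_R)$, or more precisely I want to control it against the number of \emph{ordered} pairs. Third — and this is the heart — I need to argue that the $4$-paths and the connecting $2$-paths can be "matched up" productively. This is where a dyadic-bucketing argument (exactly as in \cref{lem:DKSbuckets}) enters: bucket vertices of $A$ by degree, bucket pairs by codegree, restrict to a dominant bucket losing only $\polylog$ factors, so that within the bucket everything is regular and one can lower-bound $\sum p_4(a,a') \mathrm{codeg}_B(a,a')$ by (fraction of $4$-paths landing in the bucket) $\times$ (typical codegree of a pair in the bucket). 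The three cases in the $\min$ should correspond to which regime we are in: whether the codegree of a typical pair is $\ge 1$ trivially, whether $B$ is large enough relative to $L$ that pairs in $A$ have codegree $\sim d_R^2/R$ scaled appropriately (giving the $d_R^3/L$ factor), or whether $A$ itself is the bottleneck (giving $d_L^2 d_R^2/L^2$). Finally, I would subtract the degenerate hexagons (walks that repeat a vertex) — these are dominated by shorter-cycle counts and are lower-order given the density hypothesis and the $\polylog$ slack, which is exactly why the hypothesis carries that $\lceil\log(L+R)\rceil^{10}$ factor and why the conclusion only asks for $\Omega(1/\log^{70} n)$.

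**Main obstacle.** The delicate part is the third step: converting "many $4$-paths" plus "the average codegree is large" into "many hexagons," because the $4$-paths with large-codegree endpoint pairs and the abundant $4$-paths need not coincide — a priori all the codegree could concentrate on pairs that few $4$-paths connect. The dyadic bucketing tames this by forcing approximate regularity, but one must be careful that after restricting to a bucket, \cref{fact:p4} still applies (i.e. the sub-instance still satisfies the density threshold needed for that fact), which is why the hypothesis is stated with a generous $\polylog$ cushion. A secondary subtlety is bookkeeping the three cases so that their $\min$ genuinely covers every parameter regime; I expect each case to reduce, after the bucketing, to a short chain of inequalities of the same flavor as the displayed computation in \cref{lem:keylem}, using $d_L b \le O(m)$-type bounds and $\min(x,y) \le \sqrt{xy}$. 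I would organize the proof by: (i) bucketing lemma, (ii) the three regime computations, (iii) the degeneracy estimate, and keep each piece modular.
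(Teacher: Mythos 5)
Your high-level identity $t \approx \sum_{a,a'} p_4(a,a')\,\mathrm{codeg}_B(a,a')$ is fine, but the step you yourself flag as ``the heart'' is exactly where the proposal breaks, and dyadic bucketing does not repair it. Bucketing pairs $(a,a')$ by codegree makes the codegree roughly constant \emph{within} a bucket, but it gives you no handle on how the $4$-path mass from \cref{fact:p4} distributes across buckets: the bucket carrying almost all of the $4$-paths may be the codegree-$0$ bucket, while the codegree mass from \cref{fact:p2} sits on pairs joined by no $4$-paths. To rule out this anti-correlation you would need to re-derive path supersaturation \emph{inside} the restricted structure, and \cref{fact:p4} cannot be applied to a ``sub-instance'' defined by a set of endpoint pairs --- it needs an actual bipartite graph between two vertex sets with an edge-density hypothesis, which your bucketing step does not produce. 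This is precisely why the paper's proof does not argue globally: it organizes a $3$-step BFS out of each vertex $v\in A'$ (the sets $X_v$, $B_v'$, $Y_{rv}$ with per-vertex dyadic bucketing), so that the regularity needed to invoke \cref{fact:p2}/\cref{fact:p4} is re-established on explicit vertex sets in each of four cases, and the three terms in the $\min$ arise from those BFS-shape cases (e.g.\ \cref{clm:case2} produces the $d_R^3/L$ term, \cref{clm:case4} the $d_L^2d_R^2/L^2$ term), not from a single averaging computation.

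A second, independent gap is the degeneracy estimate. You claim degenerate closed $6$-walks (e.g.\ a $4$-cycle with a pendant edge, arising when the closing vertex or an endpoint repeats) are ``dominated by shorter-cycle counts and are lower-order.'' There is no upper bound on the number of $4$-cycles in terms of $m,L,R$ alone: a graph meeting your density hypothesis can contain a dense spot whose $C_4$ count vastly exceeds the hexagon target $d_L^3d_R^3\min(\cdot)$, so ``total walk count minus degenerate count'' can be vacuous even when the graph has many hexagons. The paper has to work around exactly this: in \cref{clm:case1} distinctness is forced by $\delta_v\ge 4$, and in \cref{clm:case3} the condition $\delta_v=1$ is used to guarantee $w_1\neq w_2$, i.e.\ the construction only ever produces genuinely non-degenerate hexagons rather than subtracting degenerate ones. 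As written, your plan would need a new idea for both of these points; neither is a bookkeeping matter.
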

\begin{proof}
Define $n=2^{\ceil{\log (L+R)}}$. The constraint $100(\log n)^{10}\le m\le n^2$
implies $n\ge 10^{8}$.

\paragraph{Organizing the Graph.}\\
Our approach is to first organize the graph and then find hexagons via two and
three step BFS's. The start of our organization of the graph is the following
claim:
\begin{claim}\label{clm:somestruct}
There exist disjoint sets $A',A''\subseteq A$ and $B',B''\subseteq B$, of
sizes $L'=|A'|,L''=|A''|,R'=|B'|,R''=|B''|$, and there exist values
$d_L',d_L'',d_R'$ such that:
\begin{itemize}
\item Vertices $v\in A'$ have $|N_{B'}(v)|\in [d_L'/4,d_L']$ and $e(A',B') \ge m/\log^{4} n$.
\item Vertices $v\in B'$ have $|N_{A''}(v)|\in [d_R'/4, d_R']$ and
$e(B',A'')\ge m/\log^{4}n$.
\item Vertices $v\in A''$ have $|N_{B''}(v)|\in [d_L''/4, d_L'']$ and $e(A'',B'')
\ge m/\log^{4}n$.
\end{itemize}
\end{claim}
\begin{proof}
Define $Y_1=Y_3=A,Y_2=Y_4=B$.
Form vertex subsets $X_1,X_2,X_3,X_4$ as follows.
Define $X_4=Y_4$.
For $i=3,2,1$ define $X_i$ as follows:
Partition $y\in Y_i$ based on which dyadic interval $|N_{X_{i+1}}(y)|$ falls in.
Let $X_i$ be a part in this partition with the maximum number of edges to
$X_{i+1}$. The number of parts in our dyadic partition (disregarding the
$\set{0}$ part, which does not contribute edges) is $\log n$. Thus, 
\begin{equation}
  \label{eq:bucketingguarantees}
e(X_4,X_3)\ge m/\log n, e(X_3,X_2)\ge m/\log^2 n, e(X_2,X_1)\ge m/\log^3 n. 
\end{equation}
Now, assign each vertex $v\in V(G)$ a uniformly random color from $\set{0,1}$.
For $i=1,2$ define $X_i'$ to be the subset of $X_i$ colored $0$.
For $i=3,4$ define $X_i'$ to be the subset of $X_i$ colored $1$.
Note that the sets $X_1',X_2',X_3',X_4'$ are disjoint by construction: in particular, even though $X_2\subseteq X_4$, by the coloring $X'_2\cap X_4'=\emptyset$.

We now argue that there exists some coloring such that the sets
$\setof{X_i'}{i\in[4]}$ have similar sizes and degree regularity properties to
the sets $\setof{X_i}{i\in[4]}$. First, because for each $i$, $|X_i|\ge 100$,
with probability at least $.9999$ we have $|X_i'|\ge |X_i|/4$ for each $i$.
Now, fix $i\in [3], v\in X_i$. The value $|N_{X_{i+1}'}(v)|$ is is a Bernoulli
random variable with $|N_{X_{i+1}}(v)|$ trials, and $p=1/2$ success probability
for each trial. By a Chernoff bound,
\begin{equation}
  \label{eq:chernoff}
\Pr[|N_{X_{i+1}'}(v)|\le |N_{X_{i+1}}(v)|/4] \le \exp(-|N_{X_{i+1}}(v)|/16).
\end{equation}
Now, we will argue that $|N_{X_{i+1}}(v)|$ is large enough that we can afford to
union bound across \cref{eq:chernoff} for all $i\in[3],v\in X_i$.
For $i=1,2,3$ choose $d_i$ such that each vertex $v\in X_i$ has
$|N_{X_{i+1}}(v)|\in[d_i/2,d_i]$; the sets $X_i$ were constructed so that this is
possible.
Now, \cref{eq:bucketingguarantees} implies for $i\in[3]$ that
\begin{equation}
  \label{eq:showNbig}
d_i \ge \frac{m}{|X_i| \log^3 n} \ge \log^{7} n.
\end{equation}
Therefore, the probability that \cref{eq:chernoff} fails for any $i\in [3], v\in
X_i$ is at most
\[ 3n\exp(-(\log^{7}n)/32) \le .001. \] 
Thus with probability at least $.99$, for each $i\in [4]$, $|X_i'|\ge |X_i|/4$,
and if $i\neq 4$ then each vertex $v\in X_i'$ has $|N_{X_{i+1}'}(v)|\in
[d_i/4,d_i]$, and finally (using \cref{eq:bucketingguarantees})
\[ |X_i'|d_i \ge e(X_i', X_{i+1}') \ge \frac{1}{16}e(X_i, X_{i+1})\ge m/\log^{4}n. \] 
Thus, by the probabilistic method there is some coloring that gives these
properties.
Fix this coloring, and let $A'=X_1', A''=X_3', B'=X_2',B''=X_4'$.
These are precisely the sets we wanted to show exist.

\end{proof}

We continue to use the sets $A',B',A'',B''$ introduced in \cref{clm:somestruct}
throughout the proof, along with the values $L',R',L'',R''$ and
$d_L',d_L'',d_R'$.
A useful immediate corollary of \cref{clm:somestruct} is 

\begin{corollary} \label{eq:primes-bigger}
$d_L'\ge (L/L')d_L/\log^{4}n\ge d_L/\log^{4}n$, 
$d_R'\ge (R/R')d_R/\log^{4}n\ge d_R/\log^{4}n$, and\\
$d_L''\ge (L/L'')d_L/\log^{4}n \ge d_L/\log^{4}n.$
\end{corollary}


Now we further organize the vertices.
In what follows, we will partition vertices based on the \defn{approximate
value} of some quantity associated with the vertex, by which we mean which of
the dyadic intervals $\mc{I}_n$ the quantity lies in. 
For vertex $v$ and vertex subset $X$, define $N^2_X(v)$ to be the set of
vertices $w\in X$ that are distance $2$ from $v$.

Fix vertex $v\in A'$.
Define $(X_v, x_v, \delta_v)$ (visualized in \cref{fig:Xv}) as follows:
Partition $w\in N^2_{A''}(v)$ into \defn{buckets} based on the approximate
value of $|N_{B'}(v)\cap N_{B'}(w)|$.
Let $X_v\subseteq A''$ be a bucket which maximizes the number of edges between
$X_v$ and $N_{B'}(v)$. 
Let $x_v = |X_v|$, and choose $\delta_v\in \set{1}\cup\set{4,8,16\ldots,n}$ such that
each $w\in X_v$ has $|N_{B'}(v)|\in [\delta_v/2, \delta_v]$ (this is possible by
the definition of $\mc{I}_n$, and the fact that the $\set{0}$ bucket does not
contribute any edges).
Then,
\begin{equation}
  \label{eq:deltavxv}
\delta_v x_v \ge e(X_v, N_{B'}(v))\ge \frac{1}{\log n}e(N_{B'}(v),A'') \ge \frac{1}{16\log n} \cdot d_L' d_R'. 
\end{equation}
\begin{figure}[htpb]
  \centering
  \includegraphics[width=0.5\textwidth]{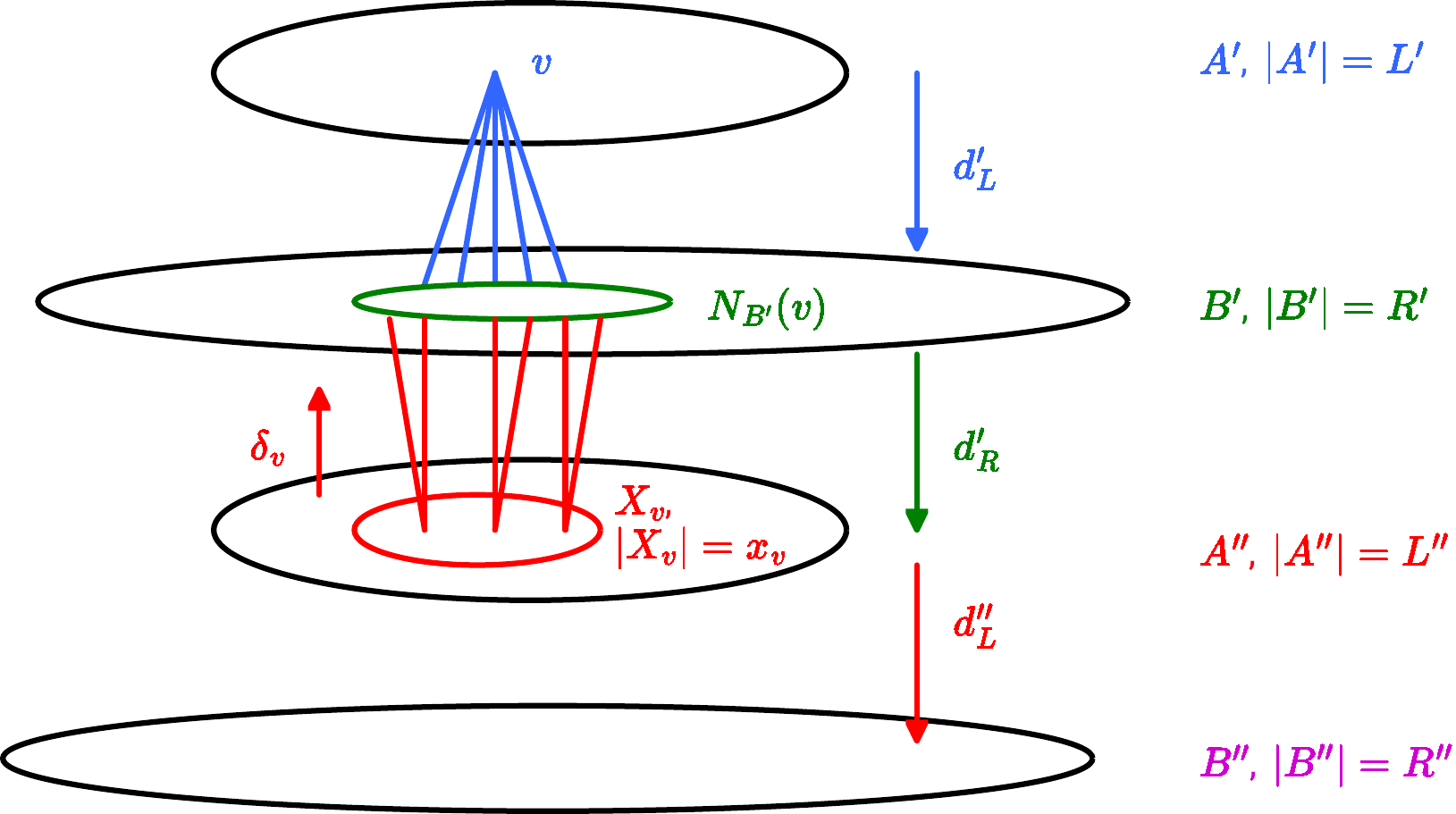}
\caption{Organizing the graph}
\label{fig:Xv}
\end{figure}

Next, we assign each vertex $r\in N_{B'}(v)$ a tuple
$(Y_{rv},y_{rv},\lambda_{rv})$ (visualized in \cref{fig:Yv}) as follows:
Partition $z\in N^2_{B''}(r)$ into buckets based on the approximate value of
$|N_{X_v}(r)\cap N_{X_v}(z)|$.
Let $Y_{rv}\subseteq B''$ be a bucket which maximizes the number of edges
between $Y_{rv}$ and $N_{X_v}(r)$. Let $y_{rv}=|Y_{rv}|$ and choose
$\lambda_{rv}\in \set{1}\cup\set{4,8,16\ldots,n}$ such that each $z\in B''$ has
$|N_{X_v}(r)\cap N_{X_v}(z)| \in [\lambda_{rv}/2,\lambda_{rv}]$.
Then,
\begin{equation}
  \label{eq:lambdarvyrv}
\lambda_{rv} y_{rv}\ge e(N_{X_v}(r), Y_{rv}) \ge \frac{1}{\log n} e(N_{X_v}(r), B''). 
\end{equation}

Finally, partition $r\in N_{B'}(v)$ into buckets based on the
approximate values of $y_{rv}, \lambda_{rv},|N_{X_v}(r)|$.
Let $B_v'\subseteq B'$ be a bucket which maximizes the number of edges between $B_v'$ and
$X_v$. Choose $y_v,\lambda_v,\beta_v$ (visualized in \cref{fig:Yv}) such that each $r\in B_v'$ has $y_{rv}\in
[y_v/2,y_v]$, $\lambda_{rv}\in [\lambda_v/2,\lambda_v]$, and $|N_{X_v}(r)| \in
[\beta_v/2, \beta_v]$.
Then, 
\begin{equation}
  \label{eq:BvprimeXv}
e(B_v', X_v) \ge \frac{1}{\log^3 n} e(N_{B'}(v), X_v).
\end{equation}


\begin{figure}[htpb]
  \centering
  \begin{subfigure}[b]{0.45\textwidth}
    \includegraphics[width=\textwidth]{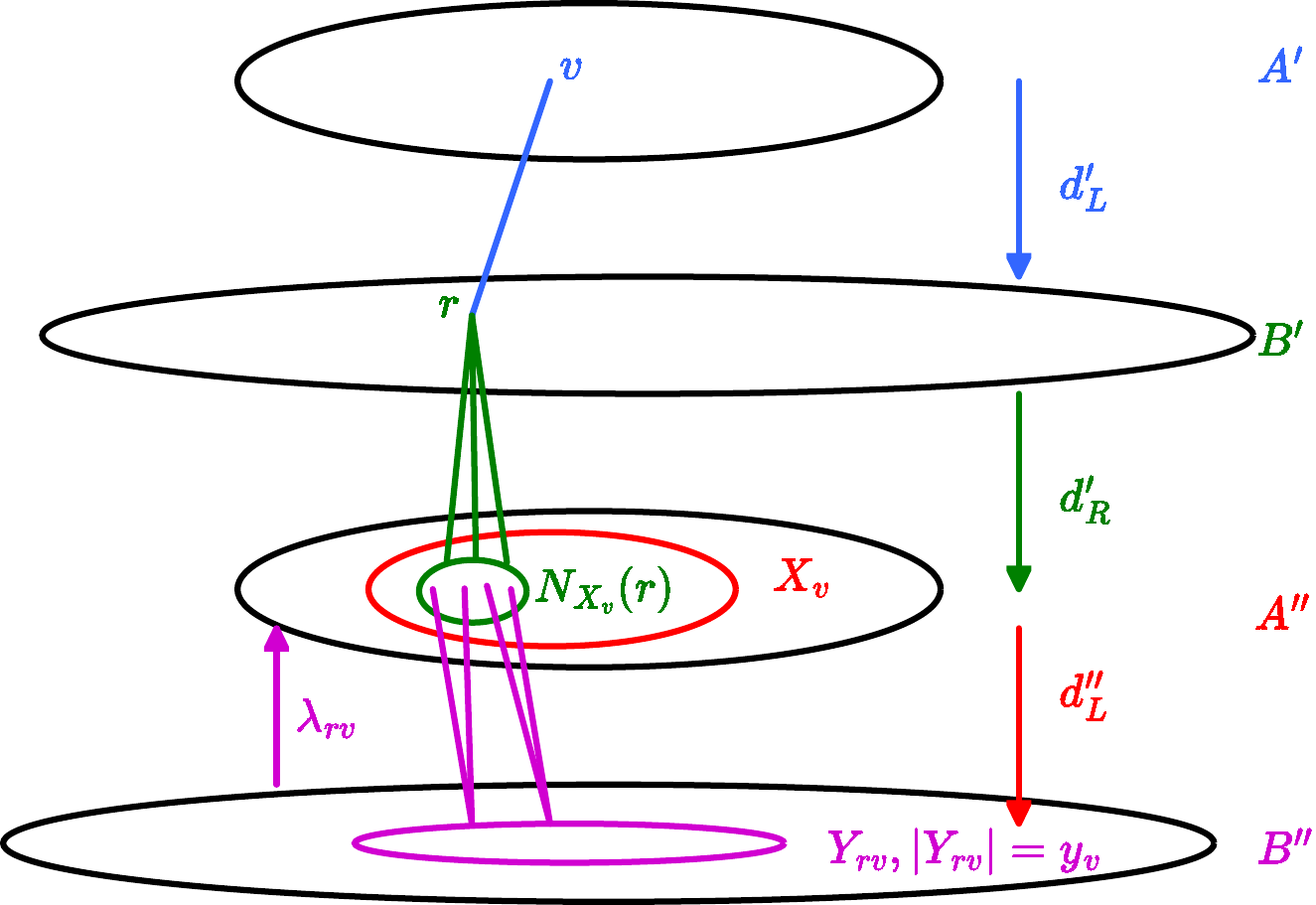}
  \end{subfigure}
  \begin{subfigure}[b]{0.45\textwidth}
    \includegraphics[width=\textwidth]{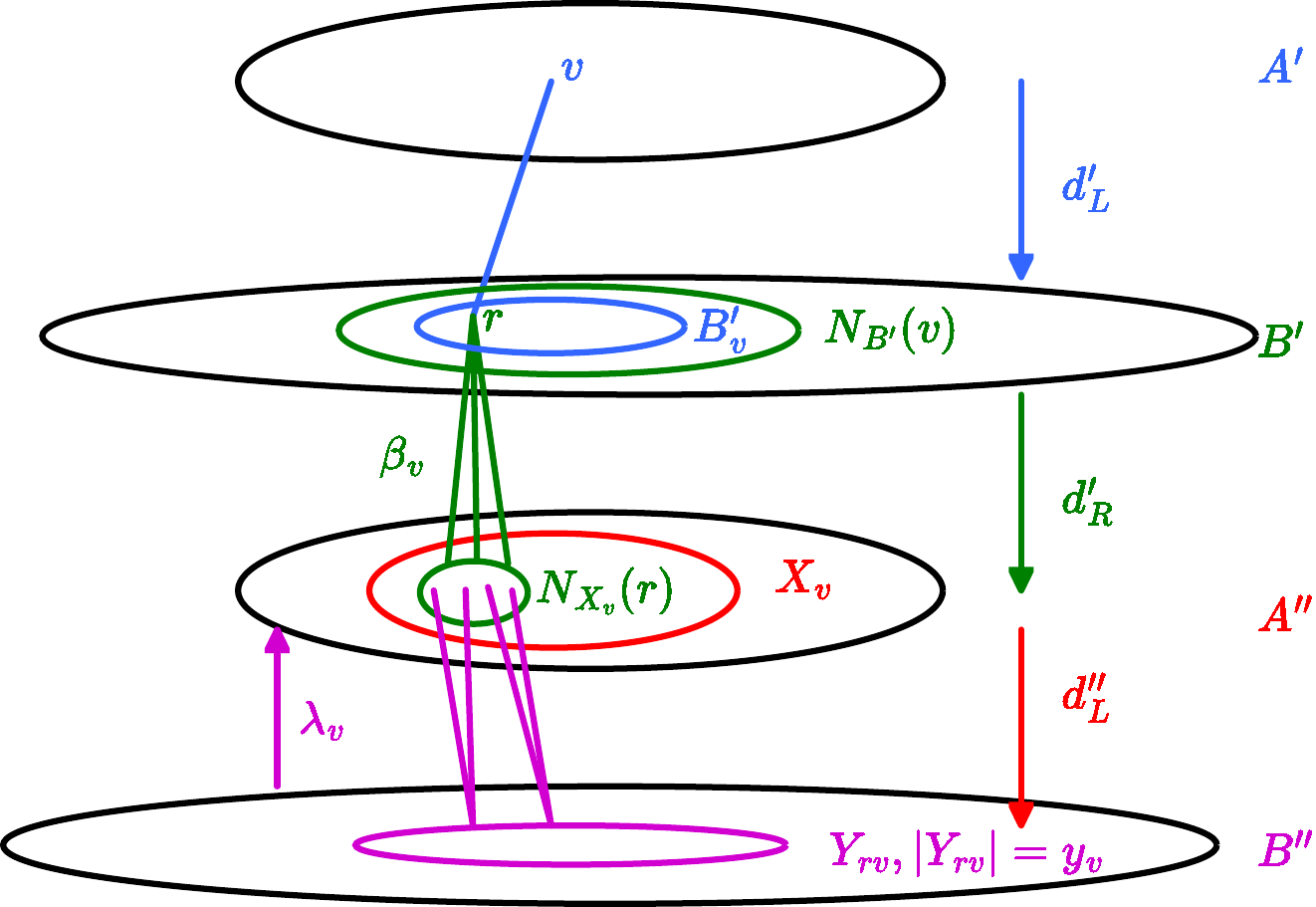}
  \end{subfigure}
\caption{Organizing the graph}
\label{fig:Yv}
\end{figure}

\paragraph{Counting Hexagons in the Organized Graph.}\\
Let $\hex(G)$ denote the number of hexagons in $G$. With the graph nicely
organized we are prepared to bound $\hex(G)$. For vertex $v\in A'$, let
$\hex(v)$ denote the number of hexagons in $G$ where $v$ is the only $A'$ vertex
participating in the hexagon. One natural way of bounding $\hex(G)$ is by
$\hex(G)\ge \sum_{v\in A'}\hex(v)$. We will split into cases based on the
characteristics of the 3-step BFS out of each vertex $v$ and attempt to show
that $\hex(v)$ is large. However, in one of the cases we will show $\hex(G)$ is
large directly from the 3-step BFS out of a single vertex, rather than bounding
$\hex(v)$. We now begin to consider 3-step BFS's based on their characteristics.
See \cref{fig:case12} and \cref{fig:case34} for an illustration of our approach.



\begin{claim}\label{clm:case1}
Suppose vertex $v\in A'$ has $x_vd_L''/4 > 2R$ and $\delta_v\neq 1$. Then
$\hex(v)\ge \Omega(d_R^3 d_L^3/L')/\log^{26}n$.
\end{claim}
\begin{proof}
Recall that $e(X_v,B'')\ge x_vd_L''/4$, which by assumption of the claim is at
least $2R$. Thus, \cref{fact:p2} implies that the number of $2$-paths in
$X_v\times B''\times X_v$ is at least
\begin{equation}
  \label{eq:case1P2}
\Omega((x_v d_L'')^2 / R).
\end{equation}
Given a two path $(w_1,z,w_2)\in X_v\times B''\times X_v$, there are at
least
$(\delta_v/2)(\delta_v/2-1)$ ways to choose distinct $r_1\in N_{B'}(w_1), r_2\in
N_{B'}(w_2)$. In this claim we are assuming $\delta_v\neq 1$ and consequentially
$\delta_v\ge 4$. Thus, ${(\delta_v/2)(\delta_v/2-1)\ge \delta_v^2/8}$.
Choosing $r_1,r_2$ in this manner, the vertices $(v,r_1,w_1,z,w_2,r_2)$ form a
hexagon containing a single vertex from $A'$. Combined with
\cref{eq:case1P2} this implies
\begin{equation}
  \label{eq:case1XX2}
\hex(v)\ge \Omega(\delta_v^2 (x_vd_L'')^2/R  ).
\end{equation}
Using \cref{eq:deltavxv} and \cref{eq:primes-bigger} in \cref{eq:case1XX2} gives
\[ \hex(v) \ge \frac{1}{\log^2 n}\Omega( (d_L' d_R' d_L'')^2 / R ) \ge
\frac{1}{\log^{26}n}\Omega(d_R^3 d_L^3/L'). \] 
\end{proof}

\begin{figure}[htpb]
  \centering
  \begin{subfigure}[b]{0.45\textwidth}
    \includegraphics[width=\textwidth]{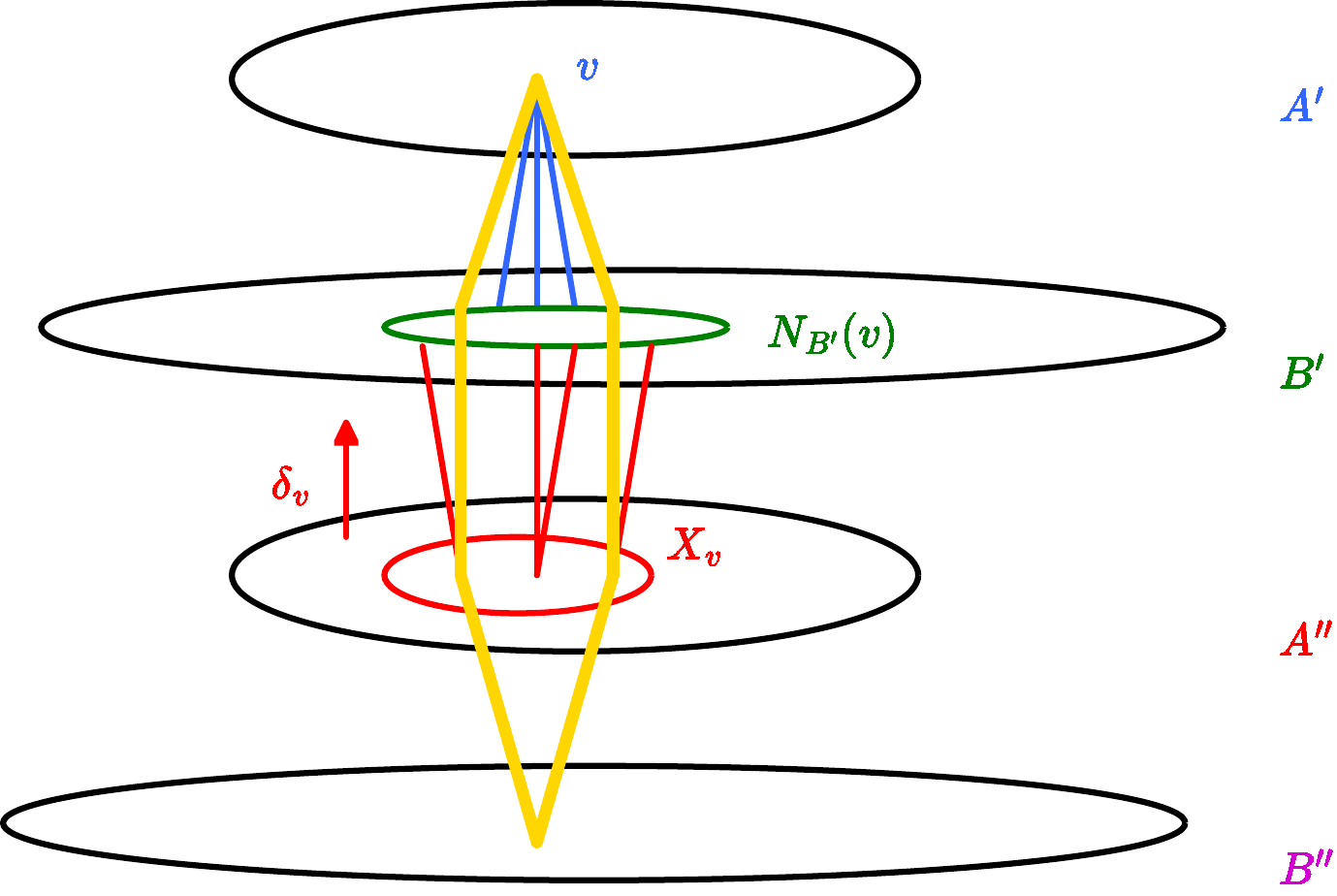}
    \caption{\cref{clm:case1}}
  \end{subfigure}
  \hfill
  \begin{subfigure}[b]{0.45\textwidth}
    \includegraphics[width=\textwidth]{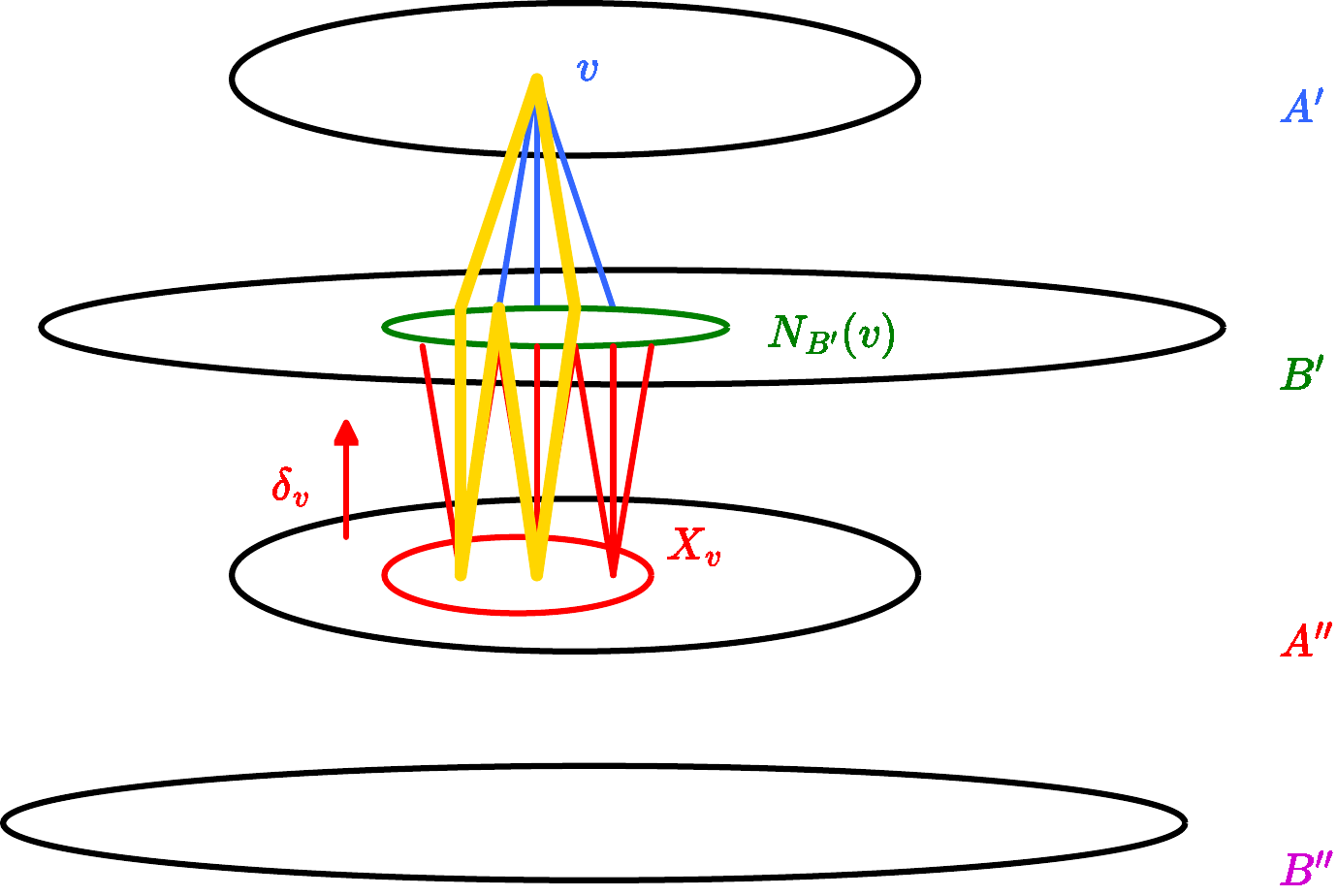}
    \caption{\cref{clm:case2}}
  \end{subfigure}
  \caption{Two methods for showing that $\hex(v)$ is large}
  \label{fig:case12}
\end{figure}

Before proceeding to the next case, we make some simple observations.
\begin{fact}\label{fact:dldrbig}
    $d_R, d_L \ge \log^{10} n \ge 10^{14}$.
\end{fact}
\begin{corollary}\label{cor:dldrdloverr}
$d_L' d_R' d_L''/R \ge 10^{15} \log n$.
\end{corollary}
\begin{proof}
By \cref{eq:primes-bigger} and \cref{fact:dldrbig} we have
\[d_L' d_L'' d_R' / R \ge (d_L^2 d_R/R)/\log^{12} n \ge (m^3/(LR)^2)/\log^{12} n \ge \log^{18} n \ge 10^{15}\log n.\]
\end{proof}
\begin{corollary}\label{cor:drprime_big}
$d_R'\ge 10^7 \log n$.
\end{corollary}
\begin{proof}
By \cref{eq:primes-bigger} and \cref{fact:dldrbig} we have
\[d_R' \ge d_R/\log^4 n \ge \log^6 n \ge 10^7 \log n.\]
\end{proof}

\begin{claim}\label{clm:case2}
Suppose vertex $v\in A'$ has $x_vd_L''/4 \le 2R$. Then
$\hex(v)\ge \Omega((d_Ld_R)^3 d_R^3/(LL'))/\log^{40}n$.
\end{claim}
\begin{proof}
We claim that if $x_v$ is this small then we have supersaturation of $4$-paths between
$N_{B'}(v)$ and $X_v$ that start in $N_{B'}(v)$. 
Indeed, \cref{eq:deltavxv}, \cref{cor:dldrdloverr}, \cref{cor:drprime_big} and our assumption that $x_v$ is small imply
\begin{equation*}
 e(N_{B'}(v), X_v) \ge \frac{1}{16\log n}d_L' d_R'\ge 10^{12}R/d_L'' + 10^5 d_L'\ge 50(d_L' + 8R/d_L'') \ge 50(d_L' + x_v).
\end{equation*}
Applying \cref{fact:p4}, \cref{eq:deltavxv}, the
hypothesis that $x_v$ is small, and \cref{eq:primes-bigger}, the number of such
$4$-paths is at least
\[ \frac{1}{\log^{4}n}\Omega\left( \frac{(d_L'd_R')^{4}}{d_L' x_v^2} \right)
\ge\Omega((d_Ld_R)^3 d_R^3/(LL'))/\log^{40}n. \] 
Each of these $4$-paths paired with $v$ gives a hexagon that contributes to
$\hex(v)$. This yields the desired bound on $\hex(v)$.
\end{proof}

The remaining two cases for what the BFS out of $v$ looks like will both have $\delta_v=1$. 
When $\delta_v=1$, we have the following interesting property:
\begin{claim}\label{clm:deltaisone}
Fix $v\in A'$ with $\delta_v=1$.
Then, $|B_v'|\ge d_L'/\log^{5}n$ and $\beta_v \ge d_R'/\log^{5}n$.
\end{claim}
\begin{proof}
We count $e(B_{v'}, X_v)$ in two ways. 
First, 
\begin{equation}
  \label{eq:d1--exp1}
e(B_{v'}, X_v) \le |B_{v'}|\beta_v.
\end{equation}
Second, combining \cref{eq:BvprimeXv} and \cref{eq:deltavxv} we have
\begin{equation}
  \label{eq:d1--exp2}
e(B_{v'}, X_v) \ge \frac{1}{\log^3 n} x_v \ge \frac{1}{16\log^{4}
n} d_L' d_R'.
\end{equation}
Combining \cref{eq:d1--exp1} and \cref{eq:d1--exp2} gives
\begin{equation}
  \label{eq:Bbeta}
|B_{v}'|\beta_v \ge \frac{1}{\log^{5} n} d_L' d_R'.
\end{equation}
Now we recall upper bounds for $|B_{v'}|,\beta_v$.
Applying $\beta_v \le d_R'$ in \cref{eq:Bbeta} gives $|B_{v}'| \ge d_L'/\log^{5} n.$
Applying $|B_{v}'|\le d_L'$ in \cref{eq:Bbeta} gives $\beta_v \ge d_R'/\log^{5}n.$


\end{proof}

\begin{figure}[htpb]
  \centering
  \begin{subfigure}[b]{0.45\textwidth}
    \includegraphics[width=\textwidth]{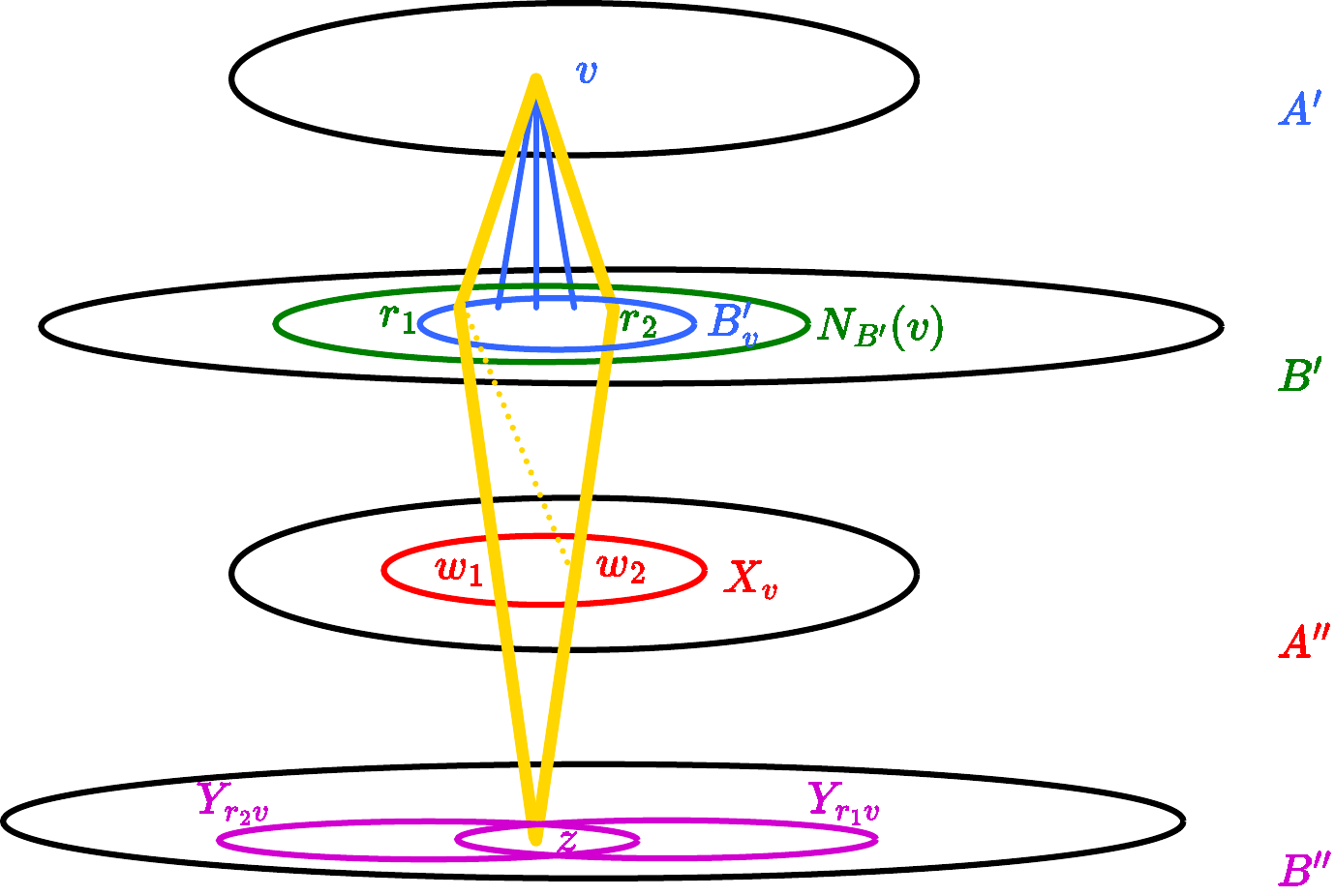}
    \caption{\cref{clm:case3}}
  \end{subfigure}
  \hfill
  \begin{subfigure}[b]{0.45\textwidth}
    \includegraphics[width=\textwidth]{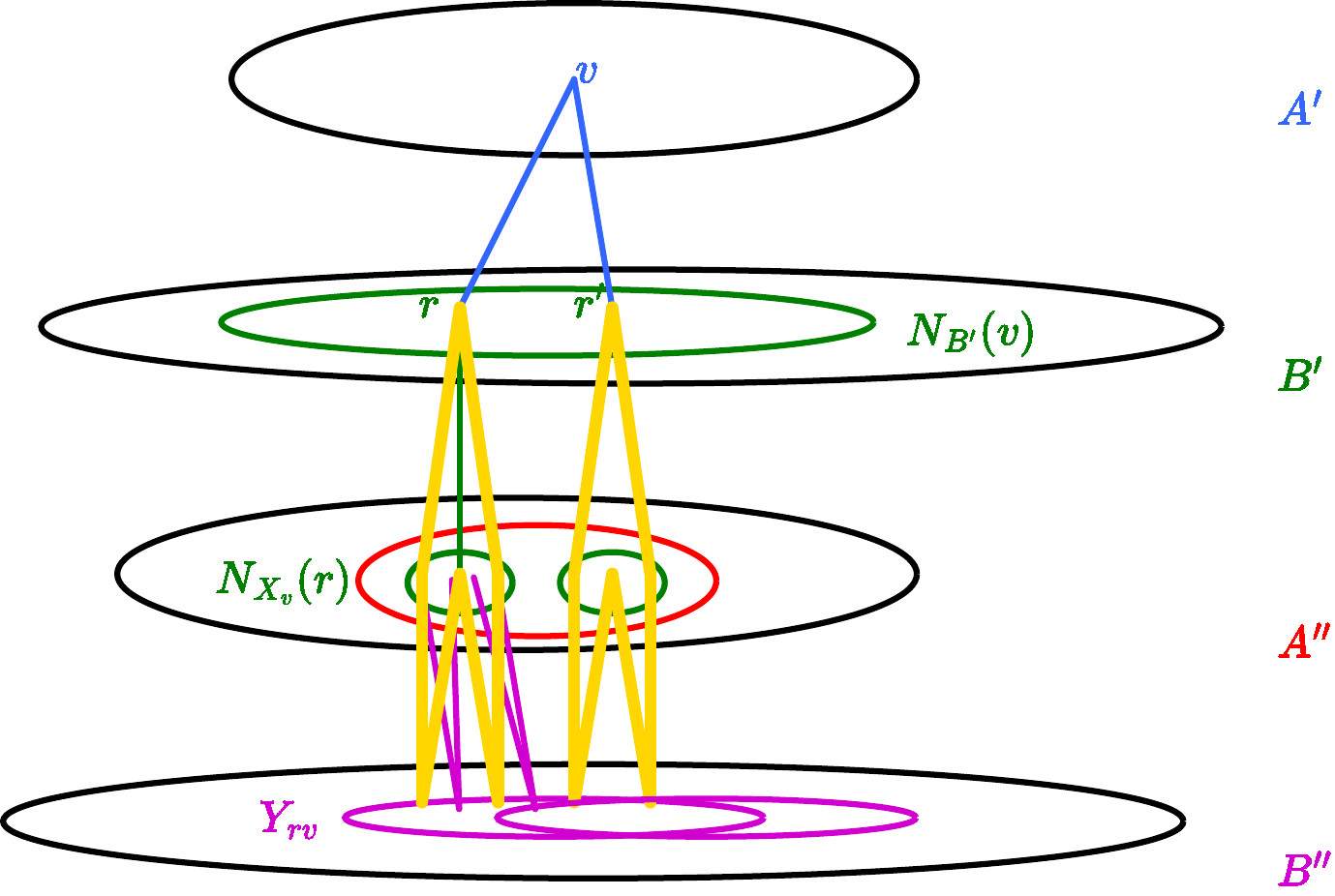}
    \caption{\cref{clm:case4}}
  \end{subfigure}
  \caption{Handling $\delta_v=1$}
  \label{fig:case34}
\end{figure}

We are now ready to handle the cases where $\delta_v=1$.
\begin{claim}\label{clm:case3}
Fix $v\in A'$ and suppose $y_v d_L' > (4\log^{5} n) R$ and $\delta_v=1$. 
Then $\hex(v)\ge \Omega((d_L^3 d_R^3)/L')/\log^{46}n$.
\end{claim}
\begin{proof}
We construct an auxiliary graph $H$ between $B_v'$ and $B''$ as follows: place an
edge between $r\in B_v'$ and $z\in B''$ if $r,z$ have a common neighbor in
$X_v$. We have, $|E(H)|\ge |B_v'|y_v/2$. 
By \cref{clm:deltaisone}, and the hypothesis of the claim we have 
\begin{equation}
  \label{eq:EHbigbig}
|E(H)| \ge \frac{1}{2\log^{5}n} d_L' y_v \ge 2R.
\end{equation}
Thus, applying \cref{fact:p2} to \cref{eq:EHbigbig}, the number of $2$-paths in
$H$ starting in $B_v'$, is at least
\begin{equation}
  \label{eq:case3aux}
\frac{1}{\log^{10}n} \Omega\left(  (d_L' y_v)^2 / R. \right).
\end{equation}
Now, fix some $2$-path $(r_1,z,r_2)\in B_v'\times B''\times B_v'$ in $H$.
There are at least $\ceil{\lambda_v/4}^2$ pairs $w_1,w_2\in X_v$ such that
$w_i\in N(z)\cap N(r_i)$ for $i=1,2$. 
We claim that for each such $w_1,w_2$, the tuple $(v,r_1,w_1,z,w_2,r_2)$ is a hexagon.
This is not immediately obvious: we need to rule out the possibility
of $w_1=w_2$ (in which case we would have a $4$-cycle with a dangling edge
rather than a true hexagon).
Fortunately, we can rule this out: the fact $\delta_v=1$ precisely means that
for $i=1,2$, $w_i$ has only $1$ neighbor in $B_v'$.
Thus, we get at least $\ceil{\lambda_v/4}^2$ times the quantity in
\cref{eq:case3aux} many hexagons. That is,
\begin{equation}
  \label{eq:chargecase3}
\hex(v)  \ge  \frac{1}{\log^{10} n} \Omega((d_L' y_v \lambda_v)^2/R).
\end{equation}
Now, recall that \cref{eq:lambdarvyrv} gives a bound on $y_v \lambda_v$: it
asserts that for any $r\in B'_v$ we have
\begin{equation}
  \label{eq:case3captureedges}
y_v \lambda_v \ge y_{rv} \lambda_{rv}
\ge \frac{1}{\log n}e(N_{X_v}(r), B'') 
\ge \frac{1}{\log n} |N_{X_v}(r)| d_L''/4
\ge \frac{1}{\log n} (\beta_v/2) d_L''/4.
\end{equation}
Now, using \cref{clm:deltaisone} we have
\begin{equation}
  \label{eq:case3simpler}
 y_v \lambda_v \ge \frac{1}{\log^{6} n}\Omega(d_R' d_L'').
\end{equation}
Now, we can bound \cref{eq:chargecase3} by
\[ \hex(v)\ge  \frac{1}{\log^{22} n}\Omega((d_L' d_R' d_L'')^2)/R.\]
Now, using \cref{eq:primes-bigger} we have
\begin{align*}
\hex(v) &\ge \frac{1}{\log^{22} n} \Omega((d_L' d_R' d_L'')^2)/R\\
&\ge \frac{1}{\log^{26}n}\Omega(d_L' (d_R'd_L'')^2 (m/L')/R)\\
&\ge \frac{1}{\log^{46}n}\Omega(d_L (d_Rd_L)^2 (m/L')/R)\\
&= \frac{1}{\log^{46}n}\Omega(d_L^3 d_R^3 /L').
\end{align*}


\end{proof}

The final case is slightly different. Instead of bounding $\hex(v)$ we will
directly bound $\hex(G)$.
\begin{claim}\label{clm:case4}
Fix $v\in A'$ with $\delta_v=1$, 
$y_vd_L' \le (4\log^{5} n)R$, and $x_v d_L''> 8R$.
Then $\hex(G) \ge \Omega((d_L d_R)^5 /L^2)/\log^{67}n$.
\end{claim}
\begin{proof}
We will find hexagons of the following form:
Take $r\in B_v'$, and then let  $(x_1,x_2,x_3,x_4,x_5)$ be a $4$-path between
$N_{X_v}(r)$ and $Y_{rv}$ (starting in $N_{X_v}(r)$). Then,
$(r,x_1,x_2,x_3,x_4,x_5)$ is a hexagon.
We claim that we have $4$-path supersaturation between $N_{X_v}(r)$ and
$Y_{rv}$. Indeed, by \cref{eq:lambdarvyrv} we have
\[ e(N_{X_v}(r), Y_{rv})\ge \frac{1}{\log n} e(N_{X_v}(r),
B'') \ge \frac{1}{4\log n}|N_{X_v}(r)|d_L''\ge \frac{1}{8\log n}\beta_v d_L''. \] 
Then, by \cref{clm:deltaisone} we have
\begin{equation}
  \label{eq:case4-edgect}
e(N_{X_v}(r), Y_{rv})\ge \frac{1}{8\log^{6} n} d_R' d_L''.
\end{equation}
Now, recall that $|N_{X_v}(r)|\le d_R'$, so \cref{eq:case4-edgect} together with
\cref{eq:primes-bigger} and \cref{fact:dldrbig} imply $e(N_{X_v}, Y_{rv})\ge 100|N_{X_v}(r)|$. 
Next, observe that by \cref{eq:primes-bigger}, \cref{eq:case4-edgect} implies
$$|Y_{rv}|\le y_v \le (4\log^{5}n)R/d_L' \le \frac{1}{4\log n}\frac{LR}{m}\le
\frac{1}{4\log^{31}n}\frac{m^2}{LR}\le
\frac{d_Ld_R}{\log^{31}n}\le\frac{d_R' d_L''}{800\log^{6} n} \le \frac{e(N_{X_v}(r),Y_{rv})}{100}.$$
Combined, we have
\[ e(N_{X_v}(r), Y_{rv}) \ge 50(|Y_{rv}|+|N_{X_v}(r)|), \] 
which is the required condition for $4$-path supersaturation.
Now, applying \cref{fact:p4} to \cref{eq:case4-edgect} the number of $4$-paths
that we get is:
\begin{equation}
  \label{eq:case4-countingp4s}
\Omega \left( \frac{e(N_{X_v}(r), Y_{rv})^{4}}{|N_{X_v}(r)||Y_{rv}|^2} \right) 
\ge \frac{1}{\log^{24} n} \Omega\left( \frac{(d_R' d_L'')^4}{\beta_v y_v^2} \right) 
\end{equation}
Applying our assumption on $y_v$ and the bound $\beta_v\le d_R'$ gives
\begin{equation}
  \label{eq:case4-almost}
\frac{1}{\log^{34} n} \Omega\left( \frac{(d_R' d_L'')^4}{d_R' (R/d_L')^2} \right) 
\end{equation}
Applying \cref{eq:primes-bigger} to \cref{eq:case4-almost}, the number of such
$4$-paths is at least 
\[ \frac{m^{9}}{L^{6}R^{5}} \Omega(1/\log^{61}n). \] 
Each $r\in B_{v}'$ contributes this many hexagons, and the contributed hexagons
are distinct. 
Applying \cref{clm:deltaisone} we have
\[ \hex(G)\ge |B_v'| \frac{m^{9}}{L^{6}R^{5}}\Omega(1/\log^{61}n) \ge
\frac{m^{10}}{L^{7}R^{5}}\Omega(1/\log^{67}n)\ge
(d_Rd_L)^3(d_L d_R/L)^2/\log^{67}n. \] 
\end{proof}




Now, we can easily combine \cref{clm:case1}, \cref{clm:case2}, \cref{clm:case3},
\cref{clm:case4} to deduce the theorem. Indeed, if any vertex $v\in A'$ falls
into the case described by \cref{clm:case4} then have the desired bound on
$\hex(G)$. Otherwise, each vertex must fall into one of the cases covered in
\cref{clm:case1}, \cref{clm:case2}, \cref{clm:case3}. Picking whichever case is
most common of these, we can sum over the $\hex(v)$ bounds for at least
$|A'|/3=L'/3$ many $v\in A'$. This gives the desired bound on $\hex(G)$.

\end{proof}

\section{Progress Towards Listing Algorithms}\label{sec:listing_progress}
In this section we describe how our partial progress towards
\cref{conj:supersat} in \cref{thm:partialsupersat} can be used together with our simplified analysis capped
$k$-walks analysis from \cref{sec:simple_capped} to give improved listing
algorithms for $C_6$'s in sparse graphs. Our main result is:

\listfast

\cref{thm:listfast} follows immediately from \cref{lem:listlist} with
$\Delta=m^{.4}$ and the following lemma:
\begin{lemma}
Fix $G$ with $\Delta(G)\le m^{.4}$. The number of capped $3$-walks in $G$ is at
most $\tilo(m^{1.6}+t)$.
\end{lemma}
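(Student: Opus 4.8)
The plan is to follow the template from \cref{sec:simple_capped} but replace the ``$C_{2k}$-free'' reasoning with the supersaturation statement of \cref{cor:disjointpartialsupersat}, so that whenever a layer has too many edges we can charge it against the abundance of hexagons $t$. First I would apply \cref{lem:DKSbuckets} with $k=3$ to reduce the count of capped $3$-walks in $G$ to $\tilo(1)$ times the number of $3$-walks in $X_1\times X_2\times X_3\times V(G')$, where $X_1\subseteq V_*$, each $v\in X_i$ has $|N_{X_{i+1}}(v)|\in[d_i/2,d_i]$ (with $X_4=V(G')$), and $G'$ has maximum degree at most $d_*$ where $X_1\subseteq V_*$ is the degree-$[d_*/2,d_*]$ layer. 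As in the proof of \cref{thm:cappedwalks}, the number of such $3$-walks is at most $|X_1|\,d_1 d_2 d_3$, and using $|X_3|d_3\le O(m)$ and $|X_1|d_3\le |X_1|d_*\le O(m)$ it suffices to prove
\[
|X_1|\, d_1 d_2 d_3 \le \tilo(m^{1.6}+t).
\]

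The heart of the argument is a replacement for \cref{lem:keylem}: I would show that for consecutive layers, either $d_i\sqrt{|X_i|}\le \tilo(m^{1/4}\sqrt{|X_{i+1}|})$ (the bound one would want if the graph between $X_i$ and $X_{i+1}$ were $C_6$-free, tuned so that two applications plus the boundary terms give $m^{1.6}$), or else the induced bipartite graph $G[X_i,X_{i+1}]$ is dense enough that \cref{cor:disjointpartialsupersat} applies and forces $t\ge \Omega\!\big(e(X_i,X_{i+1})^6/(|X_i|^3|X_{i+1}|^3\log^{70}n)\big)$. Concretely, the dense case is when $e(X_i,X_{i+1})\ge 200\ceil{\log n}^{10}\max(|X_{i+1}||X_i|^{1/3}, |X_i|\sqrt{|X_{i+1}|})$ (after orienting so the smaller side plays the role of $A$); since the layers are degree-regular we have $e(X_i,X_{i+1})=\Theta(d_i|X_i|)$, so in the dense case a short calculation turns the hexagon lower bound into $d_i|X_i|\le \tilo(t^{1/6}\cdot(\text{small powers of }|X_i|,|X_{i+1}|))$, which combined with the boundary relations $|X_3|d_3,|X_1|d_*\le O(m)$ bounds $|X_1|d_1d_2d_3$ by $\tilo(t)$ (possibly up to an additive $m^{1.6}$). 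In the sparse case, I get the clean inequality $d_i\sqrt{|X_i|}\le \tilo(m^{1/4}\sqrt{|X_{i+1}|})$ directly from the definition of ``sparse'' (rearranging the failed density condition), and chaining it for $i=1,2$ exactly as in \cref{eq:prodexpand}--\cref{eq:x1djm2k} but with exponent $1/4$ instead of $1/(k+1)=1/4$ at $k=3$ — indeed the arithmetic is identical and yields $|X_1|d_1d_2d_3\le \tilo(m^{2\cdot 3/(3+1)})=\tilo(m^{1.5})\le\tilo(m^{1.6})$.

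The bookkeeping issue is that only one of the two layer-transitions might be dense, so I would case on how many of the transitions $X_1\!\to\!X_2$ and $X_2\!\to\!X_3$ fall in the dense regime. If both are sparse, the pure \cref{thm:cappedwalks}-style chain gives $\tilo(m^{1.5})$. If at least one is dense, I use the hexagon lower bound from that transition to absorb the corresponding factor of $d_i$ and $|X_i|$, and use the sparse inequality (or the trivial bounds $|X_{i+1}|d_{i+1}\le O(m)$ and $|X_1|d_*\le O(m)$) for the remaining factors, arriving at $\tilo(m^{1.6}+t)$; the extra slack between $m^{1.5}$ and $m^{1.6}$ is exactly what lets the max-degree hypothesis $\Delta(G)\le m^{2/5}$ and the two worse exponents in \cref{cor:disjointpartialsupersat} (the $RL^{1/3}$ and $L\sqrt R$ thresholds, and the $L^3R^3$ denominator) be reconciled. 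I expect the main obstacle to be precisely this reconciliation: verifying that in every dense sub-case the crude powers of $|X_i|,|X_{i+1}|$ appearing alongside $t^{1/6}$ can be controlled using $|X_i|\le n$, $d_i\le \Delta(G)\le m^{2/5}$, and the boundary edge bounds, so that nothing worse than $m^{1.6}$ ever appears; this is a finite but somewhat delicate optimization over which layer is dense and which of the two threshold terms in \cref{cor:disjointpartialsupersat} is binding.
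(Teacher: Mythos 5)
Your high-level route is the same as the paper's: apply \cref{lem:DKSbuckets}, bound the walk count by $|X_1|d_1d_2d_3$, and invoke \cref{cor:disjointpartialsupersat} when a transition between consecutive layers is dense. But the concrete dichotomy you build on is false. In the sparse branch you claim that failing the hypothesis of \cref{cor:disjointpartialsupersat} yields $d_i\sqrt{|X_i|}\le\tilo(m^{1/4}\sqrt{|X_{i+1}|})$ as in \cref{lem:keylem}. It does not: the thresholds $RL^{1/3}$ and $L\sqrt R$ in \cref{cor:disjointpartialsupersat} are weaker (larger) than the Bondy--Simonovits-type threshold $L+R+(LR)^{2/3}$ that drives \cref{lem:keylem}, so being below them does not recover the exponent $1/(k+1)=1/4$. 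Concretely, take $|X_i|=|X_{i+1}|=m^{0.7}$ and $d_i=m^{0.3}$: then $d_i\le m^{2/5}$, $|X_i|d_i\le m$, and $e(X_i,X_{i+1})\le m\ll 200\lceil\log n\rceil^{10}\,|X_i|\sqrt{|X_{i+1}|}=\tilo(m^{1.05})$, so this pair is ``sparse'' in your sense, yet $d_i\sqrt{|X_i|}/\sqrt{|X_{i+1}|}=m^{0.3}\gg m^{1/4}$. The same numbers in all three layers ($|X_1|=|X_2|=|X_3|=m^{0.7}$, $d_1=d_2=d_3=d_*=m^{0.3}$) satisfy every constraint you use, are sparse at both transitions, and give $|X_1|d_1d_2d_3=m^{1.6}$ --- so your claim that the all-sparse case chains to $\tilo(m^{1.5})$ is also false. (Had it been true, this method would prove the conjectured-optimal $\tilo(m^{1.5}+t)$ bound, which the paper explicitly cannot reach with only the partial supersaturation of \cref{thm:partialsupersat}; the exponent $1.6$ in the theorem comes precisely from these weaker thresholds.)

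The part you defer as ``a finite but somewhat delicate optimization'' is therefore not bookkeeping; it is the entire content of the paper's proof. The paper does not chain per-transition inequalities at all. It records, for each of the two transitions, which of the three quantities $e(X_i,X_{i+1})$, $RL^{1/3}$, $L\sqrt R$ is largest, together with the size ordering of the layers (36 cases in total), translates all available constraints into linear inequalities in the exponents $x_i=\log_m|X_i|$, $\delta_i=\log_m(d_i/\lambda)$, $\delta_*$, $\tau=\log_m(t/c)$ (including $\delta_i\le\delta_*\le 0.4$, $x_i+\delta_i\le 1$, $x_1+\delta_*\le 1$, and, in a dense regime, $\tau\ge 6(x_1+\delta_1)-3(x_1+x_2)$ coming from \cref{cor:disjointpartialsupersat}), and then verifies, by solving the resulting 36 linear programs by computer, that subject to $x_1+\delta_1+\delta_2+\delta_3\ge\tau$ the objective $x_1+\delta_1+\delta_2+\delta_3$ never exceeds $1.6$. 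Since your sparse branch is unsound and your dense branch is asserted rather than verified, the proposal has a genuine gap; repairing it means carrying out this joint case analysis (by hand or by LP), not a \cref{thm:cappedwalks}-style chaining with exponent $1/4$.
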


\begin{proof}
If $n\le 1000$ the result is trivial; we assume this is not the case.
Fix $\lambda = 200\ceil{\log n}^{15}$.
As in \cref{thm:cappedwalks} and \cref{thm:morecappedwalks} we use
\cref{lem:DKSbuckets} to obtain 
$\mc{B}=(V_*,d_*, G'$, $X_1,X_2, X_3, d_1,d_2,d_3)$ with the properties
described in \cref{lem:DKSbuckets}. 
The number of $3$-walks in $X_1\times X_2 \times X_3  \times V(G')$ is at most
$|X_1| \cdot d_1d _2 d_3$; recall that the number of capped $3$-walks in $G$ is
at most $\polylog(n)|X_1|d_1d_2d_3$ --- thus, in the rest of this proof we'll
focus on bounding $|X_1|d_1d_2d_3$. At this point we break into 576 cases (with
the help of a computer). A \defn{case} is parameterized by a tuple
$(B_{12},B_{23},D_{12},D_{23}, U_{12},U_{23}, A_{12}, A_{23})$,
whose entries can take on the following values:
\begin{itemize}
\item $B_{12}\in \set{1,2}, B_{23}\in \set{2,3}$. $B_{12}$ indicates which of $X_1,X_2$ is bigger ($B$ stands for ``bigger''). More precisely, $|X_{B_{12}}| = \max(|X_1|, |X_2|)$. $B_{23}$ is defined analogously.
\item $D_{12}, D_{23}\in \set{1,2,3}$. $D_{12}, D_{23}$ indicate the ``density
regime'' ($D$ stands for ``density'') that the graphs between $(X_1,X_2)$ and
$(X_2,X_3)$ fall into. This means, letting $R=\max(|X_1|,|X_2|),
L=\min(|X_1|,|X_2|), m_{12}=e(X_1,X_2)$ and letting $v_1=1, v_2=(d_R^3/L)/\lambda^3,
v_3=(d_L^2d_R^2/L^2)/\lambda^4$, we have that $v_{D_{12}} = \min(v_1,v_2,v_3)$.
$D_{23}$ is defined analogously.
\item $U_{12}\in \set{0,1},U_{23}\in\set{0,1}$. $U_{12}$ indicates whether the
graph between $X_1,X_2$ is ``balanced'' or ``unbalanced'' ($U$ stands for
``unbalanced''). More precisely, $U_{12}$ is $1$ if $R\ge (RL)^{2/3}$ and $0$ if $R \le (RL)^{2/3}$.
$U_{23}$ is defined analogously.
\item $A_{12},A_{23}\in \set{0,1}$. $A_{12}$ indicates whether we can charge any
hexagons to the graph between $X_1, X_2$ ($A$ stands for ``any''). More precisely, $A_{12}$ is $1$ if $m\ge \lambda\max(|X_1|,|X_2|,(|X_1||X_2|)^{2/3})$ and $0$ if $m\le
\lambda\max(|X_1|,|X_2|,(|X_1||X_2|)^{2/3})$. $A_{23}$ is defined analogously.
\end{itemize}
Note that $\mc{B}$ falls into at least one case (and the cases are slightly
overlapping on the equality cases).
Now, let $\mc{C}=(B_{12},B_{23},D_{12},D_{23}, U_{12},U_{23},A_{12},A_{23})\in
\set{1,2}\times\set{2,3}\times\set{1,2,3}^2\times \set{0,1}^4$ be a case that describes $\mc{B}$.
For $i=1,2,3$, let $x_i = \log_{m}(|X_i|), \delta_i = \log_m(d_i/\lambda)$. Let
$\delta_* = \log_m(d_*/\lambda)$, and let $\tau$ denote
$\log_{m}(t/c)$ where $t$ is the number of $C_6$'s in $G$, and
$c$ is the constant from \cref{thm:partialsupersat}.
We now outline a series of \emph{linear} constraints that
$(x_1,x_2,x_3,\delta_1,\delta_2,\delta_3,\tau)$ must satisfy.

\begin{align*}
&x_1,x_2,x_3\in [0,1] & \forall i,|X_i|\le n\le m.\\
&\delta_*,\delta_1,\delta_2,\delta_3\in [0,.4] & \text{$\Delta(G)\le m^{.4}$.}\\
&\tau\ge 0\\
&\forall i\in[3], \delta_i \le \delta_*. &\text{$\Delta(G')\le\delta_*$}.\\
&\delta_*+x_1\le 1,&|X_1|\delta_*\le 2m\\
&\forall i\in [3], \delta_i + x_i \le 1& \forall i, |X_i|d_i\le 2m.
\end{align*}

Next, based on $B_{12},B_{23}$ we get ordering relationships between $x_1,x_2,x_3$.
For instance, if $B_{12}=1,B_{23}=3$ then we add the constraints
\[x_1\ge x_2, x_2\le x_3.\]
We get more constraints from $U_{12},U_{23}$. For instance, if
$U_{12}=0,U_{23}=1$ we add the following constraints:
\[ \max(x_1,x_2)\le 2\min(x_1,x_2), \max(x_2,x_3)\ge 2\min(x_2,x_3). \]
Note that while these constraints appear non-linear (due to the $\max,\min$)
they are actually linear, because $B_{12},B_{23}$ already specify orders between
$x_1,x_2$, and between $x_2,x_3$! So, based on the value of $B_{12},B_{23}$ we
could ``unroll'' these $\min/\max$'s to the appropriate $x_i$; we decline to do so here for clarity.

Next, we implement constraints based on $A_{12},A_{23}$. If $A_{12}=1$ the constraint will be
\[x_1+\delta_1 \ge \max(x_1, x_2, (2/3)(x_1+x_2)).\]
The constraint would be flipped if $A_{12}=0$. The constraint is analogous for $A_{23}$. 
Again, the $\max$ can be unrolled due to the value of the $\max$ being determined by $B_{12},U_{12}$.

Next we implement constraints based on $D_{12},D_{23}$.
For instance, if $D_{12}=1$ we would add constraints (where, as always, the $\max,\min$'s can be unrolled):
\[0\le 3(x_1+\delta_1)-3\max(x_1,x_2)-\min(x_1,x_2), 0\le 4(x_1+\delta_1)-2\max(x_1,x_2)-4\min(x_1,x_2).\]

\paragraph{Collecting Hexagons}
Now, with all these constraints in place, we will use \cref{thm:partialsupersat}
to obtain a lower bound on the number of hexagons in our graph.
Then we'll set up an optimization problem that tells us the slowest possible
time for our algorithm, observe that this optimization problem is an LP in each
case, and solve the LPs on a computer to find a bound on the performance of our algorithm.

If $A_{12}=1$, then letting $\ell = \min(x_1,x_2), r=\max(x_1,x_2), e=x_1+\delta_1$ we add the constraint
\[\tau \ge 6e-3(r+\ell)+ \min(0, 3e-3r-\ell, 4e-4\ell-2r).\]
This constraint is implied by \cref{thm:partialsupersat}, and we satisfy the
pre-condition of \cref{thm:partialsupersat} because $A_{12}=1$.
Note that as usual, the $\min/\max$'s can be unrolled based on the case. 
If $A_{23}=1$ we add an analogous constraint on $\tau$.


\paragraph{Bounding capped-$3$ walks}.
Now, we want to answer the following question:
``subject to $(x_1,x_2,x_3,\delta_*,\delta_1,\delta_2,\delta_3)$ satisfying the
above constraints, is it possible that the number of $3$-walks in $X_1\times
X_2\times X_3\times V(G')$ is larger than $\tilo(m^{1.6}+t)$?
We can answer this question in the negative by requiring the number of $3$-walks
to be more than $\Omega(t\log^{45}n)$, and showing that subject to this
constraint, the number of $3$-walks cannot be larger than $O(m^{1.6})$.
In our linear programming formulation, this corresponds to adding the constraint
\begin{equation*}
x_1+\delta_1+\delta_2+\delta_3 \ge \tau,
\end{equation*}
and optimizing the following objective function, which is the number of
$3$-walks (up to $\poly\log n$ factors):
\[x_1+\delta_1+\delta_2+\delta_3.\]
If the objective function optimizes to a number which is at most $1.6$, then it
will prove that the number of $3$-walks of the specified form cannot be larger
than $\tilo(m^{1.6}+t)$.

So, we can prove the lemma by iterating over the $576$ possible cases, defining
the appropriate linear program in each case, and checking that the optimum of
the linear program is at most $1.6$.
We wrote a simple program \cite{code} to iterate over
the cases and solve the linear programs.
Running the code we find that the optimum of each linear program is at most
$1.6$, as desired. We leave developing a human interpretable proof as an open question.
\end{proof}

\bibliography{refs}
\appendix
\section{Listing $C_{2k}$'s Using the Supersaturation Conjecture}\label{sec:listing_with_supersat}
Jin and Zhou \cite{supersat_observation} observed that the unbalanced
supersaturation conjecture would allow one to obtain conditionally optimal
sparse / dense listing algorithms for $C_{2k}$'s.
In this section we give a simple proof of this observation, using the techniques
of \cref{sec:simple_capped}.
Recall the unbalanced supersaturation conjecture:

\supersatconj
Conditional on \cref{conj:supersat} we can follow the same proof strategy from
\cref{sec:simple_capped} to obtain a $C_{2k}$ listing algorithm.
First, we need a simple consequence of \cref{conj:supersat}.
\begin{corollary}\label{cor:supersatnondisjoint}
Assume \cref{conj:supersat}.
Fix $G$ and $A,B\subseteq V(G)$ \emph{not necessarily
disjoint}.
Then,
\[ m \le O(|A|+|B|+(|A||B|)^{(k+1)/(2k)} + t^{1/(2k)}\sqrt{|A||B|}).\] 
\end{corollary}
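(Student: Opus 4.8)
The plan is to mimic the proof of \cref{cor:non-disjoint-extreme}: reduce the possibly-overlapping pair $(A,B)$ to a genuinely bipartite instance, invoke \cref{conj:supersat} on that instance, and split into two cases according to whether the conjecture's density hypothesis is satisfied.

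First I would pass to a bipartite subgraph. Write $C=A\cap B$, $A_0=A\setminus C$, $B_0=B\setminus C$; initialize $L=A_0$, $R=B_0$, and independently for each $c\in C$ put $c$ into $L$ or into $R$ with probability $1/2$, then delete every edge not lying in $L\times R$. Each edge of $E(G)\cap(A\times B)$ survives with probability at least $1/2$ (edges with zero endpoints in $C$ survive always, edges with one or two endpoints in $C$ survive with probability exactly $1/2$), so in expectation at least $e(A,B)/2$ edges survive, and by the probabilistic method some coloring yields a bipartite graph $H$ with parts $L,R$ satisfying $|L|\le|A|$, $|R|\le|B|$ and $|E(H)|\ge e(A,B)/2$. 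Since $H$ is a subgraph of $G$, it contains at most $t$ copies of $C_{2k}$.

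Next I would case on the hypothesis of \cref{conj:supersat}, writing $m_H=|E(H)|$. If $m_H<100k\bigl(|L|+|R|+(|L||R|)^{(k+1)/(2k)}\bigr)$, then $e(A,B)\le 2m_H\le 200k\bigl(|A|+|B|+(|A||B|)^{(k+1)/(2k)}\bigr)$, which is absorbed into the claimed right-hand side. Otherwise the hypothesis of \cref{conj:supersat} holds for $H$, so the number $t_H$ of $C_{2k}$'s in $H$ satisfies $t\ge t_H\ge\Omega\bigl(m_H^{2k}/(|L|^k|R|^k)\bigr)\ge\Omega\bigl(e(A,B)^{2k}/(|A|^k|B|^k)\bigr)$, whence $e(A,B)\le O\bigl(t^{1/(2k)}\sqrt{|A||B|}\bigr)$. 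Taking the maximum of the two case bounds gives the inequality in the statement.

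The calculation is routine; the only points needing care are (i) verifying that the random split retains a constant fraction of the edges no matter how large $A\cap B$ is, and (ii) handling the dichotomy in \cref{conj:supersat} correctly, so that the conclusion on $e(A,B)$ holds unconditionally rather than only when the density threshold is exceeded. I do not expect a serious obstacle.
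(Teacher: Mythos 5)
Your proposal is correct and follows essentially the same route as the paper: pass to a random bipartite split retaining at least half of $e(A,B)$ (exactly the device from \cref{cor:non-disjoint-extreme}) and then bound the surviving edge count via \cref{conj:supersat}. Your explicit dichotomy on whether the conjecture's density hypothesis holds is precisely the case analysis the paper leaves implicit, so nothing further is needed.
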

\begin{proof}
As in \cref{cor:non-disjoint-extreme}, we pass to a bipartite subgraph with at least $m/2$ edges, while only decreasing $|A|,|B|,t$.
Then we bound $m/2$ using \cref{conj:supersat}.
\end{proof}
Next we prove an analogue of \cref{lem:keylem}, for analyzing the structure
between two layers in the decomposition of \cref{lem:DKSbuckets}.
\begin{lemma}\label{lem:keylem2}
Assume \cref{conj:supersat}
Let graph $G$ have $\Delta(G)\le m^{2/(k+1)}$.
Fix $A\subseteq V(G)$ and $d\le \Delta(G)$.
Let $B$ denote the set of vertices $v\in V(G)$ that have $|N_A(v)|\in
[d/2,d]$. Then, 
\[ d\sqrt{|B|}/\sqrt{|A|} \le \O(m^{\frac{1}{k+1}} + t^{1/(2k)}). \] 
\end{lemma}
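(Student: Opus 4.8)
The plan is to rerun the chain of inequalities from the proof of \cref{lem:keylem}, feeding it the supersaturation-strengthened estimate \cref{cor:supersatnondisjoint} in place of \cref{cor:non-disjoint-extreme}. Writing $a=|A|$, $b=|B|$, and abbreviating $x\les y$ for $x=O(y)$ as in \cref{lem:keylem}, it suffices to prove $d^2 b/a \les m^{2/(k+1)} + t^{1/k}$; taking square roots and using $\sqrt{u+v}\le\sqrt u+\sqrt v$ then yields the claimed bound $d\sqrt{|B|}/\sqrt{|A|}\les m^{1/(k+1)}+t^{1/(2k)}$.

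First I would invoke the $B$-to-$A$ degree regularity, $(db)^2\les e(A,B)^2$, to write $d^2 b/a = (db)^2/(ab) \les \min\!\bigl(d^2b/a,\ e(A,B)^2/(ab)\bigr)$. Next, \cref{cor:supersatnondisjoint} (squaring its conclusion and dividing by $ab$) bounds $e(A,B)^2/(ab)$ by $O\!\bigl(a/b + b/a + (ab)^{1/k} + t^{1/k}\bigr)$. Distributing the $\min$ across this sum, retaining it only on the $a/b$, $(ab)^{1/k}$, and $t^{1/k}$ terms, gives
\[
d^2 b/a \les \min\!\Bigl(\tfrac{d^2 b}{a},\tfrac{a}{b}\Bigr) + \tfrac{b}{a} + \min\!\Bigl(\tfrac{d^2 b}{a},(ab)^{1/k}\Bigr) + \min\!\Bigl(\tfrac{d^2 b}{a},t^{1/k}\Bigr).
\]
The last term is immediately at most $t^{1/k}$. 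The first three terms are exactly the quantities shown to be $O(m^{2/(k+1)})$ in the proof of \cref{lem:keylem}: that sub-argument used only $d\le\Delta(G)\le m^{2/(k+1)}$, the crude bound $b\le\Delta(G)\cdot a$, and $db\le O(m)$, all of which hold verbatim here. Combining, $d^2 b/a \les m^{2/(k+1)} + t^{1/k}$.

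I do not anticipate a genuine obstacle; the only care needed is the bookkeeping that the extra additive $t^{1/k}$ term, once extracted via $\min(\cdot,t^{1/k})\le t^{1/k}$, rides through the remaining simplifications without coupling to the other terms, and that the final square-root step produces precisely $m^{1/(k+1)}+t^{1/(2k)}$ (with no cross term) rather than $\sqrt{m^{2/(k+1)}+t^{1/k}}$ left unsimplified. If one wished to make the argument fully self-contained, one would simply recopy the displayed chain of inequalities from \cref{lem:keylem} with the additive $t^{1/k}$ carried along in each line.
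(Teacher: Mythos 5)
Your proposal is correct and matches the paper's proof, which likewise just substitutes the bound from \cref{cor:supersatnondisjoint} into the inequality chain of \cref{lem:keylem} and carries the extra $t^{1/(2k)}\sqrt{|A||B|}$ (squared, $t^{1/k}ab$) term through unchanged. Your explicit bookkeeping of the additive $t^{1/k}$ term and the final $\sqrt{u+v}\le\sqrt{u}+\sqrt{v}$ step is exactly what the paper leaves implicit.
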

\begin{proof}
The proof is nearly identical to that of \cref{lem:keylem}.
The only difference is that instead of applying the bound
\[e(A, B)\le O(|A| + |B| + (|A||B|)^{(k+1)/(2k)}),\]
which applied when we assumed the graph was $C_{2k}$-free, we apply
\cref{cor:supersatnondisjoint}, which gives:
\[ e(A, B) \le O(|A|+|B|+(|A||B|)^{(k+1)/(2k)} + t^{1/(2k)}\sqrt{|A||B|}).\] 
Substituting this new bound on $e(A,B)$ into the proof of \cref{lem:keylem} and
otherwise leaving the analysis unchanged gives the desired bound on
$d\sqrt{|B|/|A|}.$
\end{proof}
Now we deduce the analog of \cref{thm:cappedwalks}.
\begin{theorem}\label{thm:morecappedwalks}
Assume \cref{conj:supersat}.
Let $G$ be a graph with $\Delta(G)\le m^{2/(k+1)}$. Then, the number of capped $k$-walks in $G$ is at most $\tilo( m^{2k/(k+1)} + t ).$
\end{theorem}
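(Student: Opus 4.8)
The plan is to mirror the proof of \cref{thm:cappedwalks} almost verbatim, substituting \cref{lem:keylem2} for \cref{lem:keylem} at the one place where the $C_{2k}$-freeness was used. First I would invoke \cref{lem:DKSbuckets} to obtain $V_*, d_*, G'$, the layer sets $X_1,\ldots,X_k\subseteq V(G')$, and the degree parameters $d_1,\ldots,d_k$ with the stated properties. As before, the number of capped $k$-walks in $G$ is $\tilo(1)$ times the number of $k$-walks in $X_1\times X_2\times\cdots\times X_k\times V(G')$, and this latter count is at most $|X_1|\prod_{j=1}^k d_j$ by the consecutive-layer regularity property.

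Next I would apply \cref{lem:keylem2} to each consecutive pair $(X_i,X_{i+1})$ for $i\in[k-1]$: since $X_{i+1}$ consists exactly of vertices with $|N_{X_i}(\cdot)|\in[d_i/2,d_i]$ and $d_i\le\Delta(G)\le m^{2/(k+1)}$, we get
\[
d_i\sqrt{|X_i|}\le \O\left(\left(m^{1/(k+1)}+t^{1/(2k)}\right)\sqrt{|X_{i+1}|}\right).
\]
Repeatedly telescoping this inequality through the product $|X_1|\prod_{j=1}^k d_j$ — exactly as \cref{eq:adjacentbound} was iterated to yield \cref{eq:prodexpand} — replaces each of the $k-1$ factors of $m^{1/(k+1)}$ with $(m^{1/(k+1)}+t^{1/(2k)})$, giving
\[
|X_1|\prod_{j=1}^k d_j \le \sqrt{|X_1||X_k|}\,d_k\cdot \O\left(\left(m^{1/(k+1)}+t^{1/(2k)}\right)^{k-1}\right).
\]
Then I would use the same two observations as in the proof of \cref{thm:cappedwalks}: $|X_k|d_k\le\O(m)$ because every vertex of $X_k$ has degree at least $d_k/2$, and $|X_1|d_k\le |X_1|d_*\le\O(m)$ because $X_1\subseteq V_*$ has degree at most $d_*\ge d_k$. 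This bounds $\sqrt{|X_1||X_k|}\,d_k\le\O(m)$, so the whole expression is $\O(m\cdot(m^{1/(k+1)}+t^{1/(2k)})^{k-1})$.

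Finally, I would expand $(m^{1/(k+1)}+t^{1/(2k)})^{k-1}$ term-by-term: the pure-$m$ term contributes $m\cdot m^{(k-1)/(k+1)} = m^{2k/(k+1)}$, the pure-$t$ term contributes $m\cdot t^{(k-1)/(2k)}$, and mixed terms interpolate. I would bound $m\cdot t^{(k-1)/(2k)}$ by $\O(m^{2k/(k+1)} + t)$ using weighted AM-GM (or just the fact that for any nonnegative $x$, $m\cdot x^{(k-1)/(2k)}\le m^a + x$ for the appropriate exponent $a\le 2k/(k+1)$, with the mixed terms handled similarly), so every term is absorbed into $\tilo(m^{2k/(k+1)}+t)$. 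Combining with the $\tilo(1)$ factor from \cref{lem:DKSbuckets} gives the claimed bound. The only mildly delicate step is the last one — checking that each of the $O(k)$ terms in the binomial expansion of $(m^{1/(k+1)}+t^{1/(2k)})^{k-1}$, after multiplication by $m$, is dominated by $m^{2k/(k+1)}+t$ — but since $k$ is a fixed constant this is a routine exponent computation with no real obstacle.
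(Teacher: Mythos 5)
Your proposal is correct and follows the paper's proof essentially verbatim: it invokes \cref{lem:DKSbuckets}, substitutes \cref{lem:keylem2} for \cref{lem:keylem}, reuses the same $\sqrt{|X_1||X_k|}\,d_k\le O(m)$ observation, and finishes with an exponent computation (your binomial expansion plus weighted AM--GM is exactly what the paper compresses into ``applying H\"older's inequality''). One cosmetic slip: when applying \cref{lem:keylem2} the set $A$ should be $X_{i+1}$ with $X_i$ merely \emph{contained} in the resulting set $B$ (by \cref{lem:DKSbuckets} each $v\in X_i$ has $|N_{X_{i+1}}(v)|\in[d_i/2,d_i]$, not the reverse), but the inequality you then write, $d_i\sqrt{|X_i|}\le O\left(\left(m^{1/(k+1)}+t^{1/(2k)}\right)\sqrt{|X_{i+1}|}\right)$, is the correct one.
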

\begin{proof}
Just like in the proof of \cref{thm:cappedwalks}
we apply \cref{lem:DKSbuckets} to obtain $V_*,d_*, G'$, $X_1,\ldots,
X_k\subseteq V(G'), d_1,\ldots, d_k$ with the properties described in
\cref{lem:DKSbuckets}. The number of $k$-walks in $X_1\times \cdots \times
X_k\times V(G')$ is at most $|X_1| \cdot \prod_{j=1}^k d_j.$
In \cref{thm:cappedwalks} we bounded this product by repeatedly applying
\cref{lem:keylem} (and then using an additional observation to bound $\sqrt{|X_1||X_k|}d_k$).
Using \cref{lem:keylem2} in place of \cref{lem:keylem} we have:
\begin{equation}\label{eq:holders}
|X_1|\prod_{j=1}^k d_j \le O( (m^{1/(k+1)}+t^{1/(2k)})^{k-1} \cdot m ).
\end{equation}
Applying H\"older's Inequality to \cref{eq:holders} proves the desired bound on capped $k$-walks.
\end{proof}

Finally, we have the analog of \cref{cor:detection}.
\begin{corollary}\label{thm:conditional_listing}
Assume \cref{conj:supersat}.
Then, there is an algorithm for listing $C_{2k}$'s with running time
$\tilo(t+\min(m^{2k/(k+1)}, n^2))$.
\end{corollary}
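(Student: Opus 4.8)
The plan is to derive \cref{thm:conditional_listing} from \cref{lem:listlist} and \cref{thm:morecappedwalks} in exactly the way that \cref{cor:detection} was derived from \cref{lem:listlist} and \cref{thm:cappedwalks}, but now tracking the extra ``$+t$'' term. First I would invoke \cref{lem:listlist} with the threshold $\Delta = m^{2/(k+1)}$. Its conclusion is that there is a randomized algorithm listing all $C_{2k}$'s in time $\tilo(m^2/\Delta + W + t)$, where $W$ is the number of capped $k$-walks starting at a vertex of degree at most $\Delta$. With $\Delta = m^{2/(k+1)}$ we have $m^2/\Delta = m^{2k/(k+1)}$, so it only remains to bound $W$.

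Next I would bound $W$ using \cref{thm:morecappedwalks}. The subtlety is that $W$ counts capped $k$-walks starting at low-degree vertices of $G$, whereas \cref{thm:morecappedwalks} bounds capped $k$-walks in a graph of maximum degree at most $m^{2/(k+1)}$. The standard fix (the same one implicit in \cref{cor:detection}) is to let $G^{\le}$ be the subgraph induced on vertices of degree at most $\Delta = m^{2/(k+1)}$ in $G$; then $\Delta(G^{\le}) \le m^{2/(k+1)} \le (m')^{2/(k+1)}$ where $m' = |E(G^{\le})| \le m$, and every capped $k$-walk of $G$ starting at a low-degree vertex has all of its vertices of degree at most $\Delta$ (since $x_0 \succ x_i$ forces $\deg(x_i)\le \deg(x_0)\le \Delta$), hence is a capped $k$-walk of $G^{\le}$. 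Applying \cref{thm:morecappedwalks} to $G^{\le}$ (whose number of $C_{2k}$'s is at most $t$) gives $W \le \tilo((m')^{2k/(k+1)} + t) \le \tilo(m^{2k/(k+1)} + t)$. Plugging this into the running time from \cref{lem:listlist} yields total time $\tilo(m^{2k/(k+1)} + t)$.

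Finally, to get the $\min(m^{2k/(k+1)}, n^2)$ form I would simply run this algorithm in parallel (or by time-sharing) with the $\tilo(n^2 + t)$-style approach: more directly, since $G$ is $C_{2k}$-free-testing aside, one can note that the whole argument also goes through with the alternative choice $\Delta$ chosen so that $m^2/\Delta = n^2$, i.e.\ $\Delta = m^2/n^2$; in the dense regime $m \ge n^{1+1/k}$ this is the better bound, and in that regime the capped-$k$-walk count is likewise $\tilo(n^2 + t)$. The cleanest phrasing is: run \cref{lem:listlist} with $\Delta = \max(m^{2/(k+1)}, m^2/n^2)$, giving running time $\tilo(\min(m^{2k/(k+1)}, n^2) + W + t)$ with $W \le \tilo(\min(m^{2k/(k+1)}, n^2) + t)$ by the same reduction to \cref{thm:morecappedwalks}. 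The main obstacle, and really the only non-bookkeeping point, is the reduction to a bounded-degree subgraph so that \cref{thm:morecappedwalks} applies to $W$; everything else is substituting one bound into another. Derandomization follows from the remark after \cref{lem:listlist} that color coding can be made deterministic via perfect hash families, together with the fact that \cref{thm:morecappedwalks}'s conclusion is a deterministic combinatorial bound (the randomness in \cref{conj:supersat}'s proof ingredients is internal to the existence statement, not the algorithm).
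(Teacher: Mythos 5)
Your main step --- invoking \cref{lem:listlist} with $\Delta=m^{2/(k+1)}$ and bounding $W$ by applying \cref{thm:morecappedwalks} to the capped $k$-walks starting at low-degree vertices --- is exactly the paper's argument, and your explicit remark that a capped walk starting at a vertex of degree at most $\Delta$ stays inside the low-degree part is the right way to make the application legitimate. (One small slip there: you write $\Delta(G^{\le})\le m^{2/(k+1)}\le (m')^{2/(k+1)}$, but $m'\le m$ makes that inequality point the wrong way; the harmless fix is to run the capped-walk bound with the \emph{original} $m$ as the parameter, since the proofs only use $|E|\le m$ and $\Delta\le m^{2/(k+1)}$, which is also how the paper applies it.) This gives the $\tilo(t+m^{2k/(k+1)})$ algorithm correctly.

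The genuine gap is in how you obtain the $\min(m^{2k/(k+1)},n^2)$ term. Your proposal to rerun \cref{lem:listlist} with $\Delta=\max(m^{2/(k+1)},m^2/n^2)$ and claim $W\le\tilo(n^2+t)$ ``by the same reduction to \cref{thm:morecappedwalks}'' does not work: in the dense regime $m\ge n^{1+1/k}$ one has $m^2/n^2\ge m^{2/(k+1)}$, so the induced low-degree subgraph no longer satisfies the hypothesis $\Delta(G)\le m^{2/(k+1)}$ of \cref{thm:morecappedwalks}, and even if it did, that theorem's conclusion is $\tilo(m^{2k/(k+1)}+t)$, not $\tilo(n^2+t)$; nothing in the paper bounds capped-walk counts by $n^2$. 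Your fallback of time-sharing with ``the $\tilo(n^2+t)$-style approach'' is circular, since no unconditional $\tilo(n^2+t)$ listing algorithm for general $k$ is available here (only $k=3$ via \cite{C6sCe}). The paper's route is much simpler and avoids any second run of the algorithm: by the balanced supersaturation bound (\cref{fact:vanillasupersat}), if $m^{2k/(k+1)}>\Omega(n^2)$, i.e.\ $m>\Omega(n^{1+1/k})$, then $t\ge c(m/n)^{2k}\ge\Omega(m^{2k/(k+1)})$, so in that regime the $t$ term already dominates and the single $\tilo(t+m^{2k/(k+1)})$ algorithm is automatically $\tilo(t+\min(m^{2k/(k+1)},n^2))$. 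Replacing your dense-regime argument with this observation completes the proof.
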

\begin{proof}
First, we note that it is sufficient to establish the existence of an
$\tilo(t+m^{2k/(k+1)})$ time algorithm. This is because balanced supersaturation
results (\cref{fact:vanillasupersat}) imply that if $m^{2k/(k+1)}>\Omega(n^2)$,
meaning $m>\Omega(n^{1+1/k})$, then $t>\Omega(m^{2k/(k+1)})$, and so the $t$
term dominates the running time. Now we give a listing algorithm with running time $\tilo(t+m^{2k/(k+1)})$.

Set $\Delta = m^{2/(k+1)}$. Assuming \cref{conj:supersat},
\cref{thm:morecappedwalks} implies that there are at most $\tilo( m^{2k/(k+1)} +
t )$ capped-$k$ walks starting at a vertex of degree at most $\Delta$.
Then, applying \cref{lem:listlist} gives a listing algorithm with the desired
running time.

\end{proof}

\section{Folklore Results about Path Supersaturation}\label{appendix:path-facts}

\begin{fact}
Let $G$ be a bipartite graph with parts $A,B$ of sizes $L,R$. If $m\ge 2R$, then
the number of $2$-paths in $A\times B\times A$ is at least $L(m/L)(m/R)/2$.
\end{fact}
\begin{proof}
Using Cauchy-Schwarz, the number of 2-paths is at least 
\[\sum_{v\in B} \deg(v)(\deg(v)-1) \ge \frac{1}{R} \left(\sum_{v\in B} \deg(v)\right)^2 - \sum_{v\in B} \deg(v) \ge m (m/R - 1) \ge m^2/(2R).\]
\end{proof}

\begin{fact}
Let $G$ be a bipartite graph with parts $A,B$ of sizes $L,R$. If $m\ge 50(R+L)$, then
the number of $4$-paths in $A\times B\times A\times B\times A$ is at least
$\Omega(L(m/L)^2(m/R)^2)$.
\end{fact}
\begin{proof}
Throughout the proof we use the notation $\deg_S(v) = |N(v)\cap S|$.
\begin{itemize}
\item Let $B_1$ be the set of vertices $v\in B$ with $\deg_{A}(v)\ge e(A,B)/(2|B|)$.
\item Let $A_1$ be the set of vertices $v\in A$ with $\deg_{B_1}(v)\ge e(A,B_1)/(2|A|)$.
\item Let $B_2$ be the set of vertices $v\in B_1$ with $\deg_{A_1}(v)\ge e(A_1,B_1)/(2|B_1|)$.
\end{itemize}
Observe that $e(A,B_1)\ge e(A,B)/2, e(A_1,B_1)\ge e(A,B)/4, e(A_1,B_2)\ge e(A,B)/8$.
We now demonstrate that there are a large number of $4$-paths starting in $A$. We can generate such a $4$-path as follows: 
\begin{itemize}
\item Fix an edge $(v_5, v_4)\in A_1\times B_2$.
\item Take $v_3\in N_{A_1}(v_4)$, $v_2\in N_{B_1}(v_3), v_1\in N_A(v_2)$.
\end{itemize}
The number of such walks is at least:
\[ \frac{e(A,B)}{8}\left(\frac{e(A,B)}{8|B_1|}-1\right) \left(\frac{e(A,B)}{4|A|}-1\right) \left(\frac{e(A,B)}{2|B|}-2\right) \ge\Omega(m^4/(LR^2)). \]

\end{proof}

\end{document}